\newtheorem{theorem}{Theorem}[section]
\newtheorem{claim}[theorem]{Claim}
\newtheorem{lemma}[theorem]{Lemma}
\newtheorem{conjecture}[theorem]{Conjecture}
\theoremstyle{definition}
\newtheorem{definition}[theorem]{Definition}
 \newcommand{\eps}{\varepsilon}
\renewcommand{\mid}{\;\middle\vert\;}
\newcommand{\rv}[1]{\boldsymbol{#1}}
\newcommand{\ent}{\mathbf{H}}
\renewcommand{\Pr}{\operatorname*{\mathbf{Pr}}}
\DeclareMathOperator*{\E}{\mathbf{E}}
\newcommand{\pr}[2][]{ \ifthenelse{\isempty{#1}}
  {\Pr\left[#2\right]} {\Pr_{#1}\left[#2\right]} }
\newcommand{\ex}[2][]{ \ifthenelse{\isempty{#1}}
  {\E\left[#2\right]}
  {\E_{#1}\left[#2\right]} }
\newcommand{\GHM}{\text{GHM}}
\newcommand{\delete}[1]{}
\title{Quantum versus Classical Separation in  Simultaneous Number-on-Forehead Communication}
\begin{document}

\author{
Guangxu Yang \thanks{Thomas Lord Department of Computer Science, University of Southern California. Research supported by NSF CAREER award 2141536. Email: \texttt{\{guangxuy,jiapengz\}@usc.edu}}
\and
Jiapeng Zhang \footnotemark[1]
}
\maketitle

\begin{abstract}
Quantum versus classical separation plays a central role in understanding the advantages of quantum computation. In this paper, we present the first exponential separation between quantum and bounded-error randomized communication complexity in a variant of the Number-on-Forehead (NOF) model. Namely, the three-player \emph{Simultaneous Number-on-Forehead} model. Specifically, we introduce the \emph{Gadgeted Hidden Matching Problem} and show that it can be solved using only $O(\log n)$ simultaneous quantum communication. In contrast, any simultaneous randomized protocol requires $\Omega(n^{1/16})$ communication.

On the technical side, a key obstacle in separating quantum and classical communication in NOF models is that all known randomized NOF lower bound tools, such as the discrepancy method, typically apply to both randomized and quantum protocols. In this regard, our technique provides a new method for proving randomized lower bounds in the NOF setting and may be of independent interest beyond the separation result.

\end{abstract}
\section{Introduction}
One of the central goals in the study of quantum advantage is to demonstrate separations between quantum and classical computation in various computational models. In this direction, substantial progress has been achieved in communication complexity, where a line of work~\cite{buhrman1998quantum,raz1999exponential,bar2004exponential,gavinsky2007exponential,regev2011quantum,gavinsky2016entangled,girish2022quantum,gavinsky2019quantum,gavinsky2020bare,goos2024quantum} has established exponential separations in multiple settings. These results provide explicit problems that can be solved by quantum protocols using only $(\log n)^{O(1)}$ communication, whereas any classical randomized protocol solving the same problem must use $n^{\Omega(1)}$ communication.

However, all of the aforementioned results pertain to two-party communication models. In contrast, separations in multiparty communication, specifically the Number-on-Forehead (NOF) models, remain poorly understood. Lower bounds in the NOF models are especially intriguing because NOF protocols can simulate a broader range of computational models than those in the two-party setting. To this end, G{\"o}{\"o}s, Gur, Jain, and Li~\cite{goos2024quantum} highlighted the separation of quantum and randomized communication in NOF models as an important open problem.

The main obstacle to obtaining strong quantum-versus-classical separations in the NOF setting, as noted by~\cite{goos2024quantum}, lies in the lack of techniques for proving lower bounds for randomized NOF protocols. Existing approaches, such as the discrepancy method, often apply equally well to quantum communication, making it difficult to distinguish between the power of quantum and classical protocols in the NOF model~\cite{lee2009lower}.

In this paper, we study the \textit{simultaneous NOF model}, a variant of the NOF communication model. In this setting, each player \( i \) sends a single message, computed based on all inputs except \( x_i \), to a referee (or the last player), who then determines the output after receiving all the messages. Although the simultaneous NOF model appears weaker than the standard NOF model, proving strong lower bounds in this setting remains highly challenging. 

Strong simultaneous NOF lower bounds have significant implications in various areas, including lower bounds for the \textrm{ACC} circuit class~\cite{HG90,pudlak1997boolean}, private information retrieval~\cite{chor1998private}, and position-based cryptography~\cite{brody2017position}. Furthermore, many known non-trivial NOF protocols such as the $\sqrt{\log N}$ protocol for \textit{Exactly-N} \cite{chandra1983multi}, the $O\left(\frac{n\log\log n}{\log n}\right)$ for multipointer jumping \cite{brody2015dependent} and Shifting \cite{HG90},  and Grolmusz protocols for generalized inner-product \cite{grolmusz1994bns} and set disjointness \cite{rao2020communication} are all simultaneous protocols.

\subsection{Our Results}
In this paper, we establish the first exponential separation between quantum and bounded-error randomized communication complexity in the simultaneous NOF model. Our problem is inspired by the \emph{Hidden Matching Problem} introduced by~\cite{bar2004exponential}. In the Hidden Matching Problem, 
\begin{enumerate}
    \item Alice is given a string \( z \in \{0,1\}^n \). 
    \item Bob is given $M\in \mathcal{M}_n$ where $\mathcal{M}_n$ denotes the family of all possible perfect matchings on $n$ nodes. 
\end{enumerate} Their goal is for Bob to output a tuple \( \langle i, j, b \rangle \) such that the edge \( (i, j) \) belongs to \( M \) and \( b = z_i \oplus z_j \). As shown in~\cite{bar2004exponential}, the Hidden Matching Problem admits an \( O(\log n) \) simultaneous quantum communication protocol, while any one-way randomized protocol requires \( \Omega(\sqrt{n}) \) communication.

Inspired by \cite{bar2004exponential}, we introduce the \textit{Gadgeted Hidden Matching Problem} (GHM). Let \( m := n/2 \). For each \( i \in [m] \), we define a perfect matching \( M_i \) between $\{0, \dots, m-1\}$ and $\{m, \dots, 2m-1\}$ as
\[
M_i := \left\{ \left(\ell, m + \left((i + \ell) \bmod m\right)\right) : \ell \in [m] \right\}.
\]
Let \( \mathcal{M} = \{M_1, \dots, M_m\} \) denote the collection of these matchings. The GHM is defined as follows:

\begin{definition}
\label{dfn: GHM}
Let \( g : \{0,1\}^n \times \{0,1\}^n \rightarrow [m] \) be any gadget function. The \textit{Gadgeted Hidden Matching Problem}, denoted \( \GHM\circ g \), involves three inputs distributed among the players as follows:
\begin{itemize}
    \item Alice receives \( z,y \in \{0,1\}^n \),
    \item Bob receives \( z,x \in \{0,1\}^n \),
    \item Charlie receives \( x, y \in \{0,1\}^n\).
\end{itemize}
The goal is for Charlie to output a tuple \( \langle \ell, r, b \rangle \) such that \( (\ell, r) \in M_{g(x, y)} \) and \( b = z_\ell \oplus z_{r} \).
\end{definition}

Remarkably, this problem remains easy for quantum communication. Similar to the Hidden Matching Problem, Alice only needs to send a uniform superposition of the string \( z \), with a communication cost of \( O(\log n) \) qubits. Charlie can then perform a measurement on this superposition, which depends on the matching \( M_{g(x,y)} \), and output the parity of some pair in \( M_{g(x,y)} \) (see the appendix for more details). We note that Bob does not need to send any message using this protocol. Thus, the main result of our paper is the \( n^{\Omega(1)} \) randomized simultaneous NOF communication.

\begin{theorem}\label{main_theorem}
There exists an explicit gadget function \( g : \{0,1\}^n \times \{0,1\}^n \rightarrow [m] \) such that the randomized simultaneous NOF communication complexity of \( \GHM \circ g \) is \( \Omega(n^{1/16}) \).
\end{theorem}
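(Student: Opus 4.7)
The plan is to prove the lower bound through a Fourier-analytic argument that exploits the \emph{simultaneous} structure together with pseudorandom properties of the gadget $g$. The starting point is a simple but crucial observation: once a deterministic protocol $\pi$ of total communication $c$ is fixed, the combined message $(m_A(z,y), m_B(z,x))$ is, for every fixed $(x,y)$, a function of $z$ alone with range at most $2^c$; call this function $f_{x,y}$. Charlie's output is a function of $(x,y,f_{x,y}(z))$. Had $f_{x,y}$ been a single function $f(z)$ independent of $(x,y)$, the classical one-way Hidden Matching lower bound of Bar-Yossef, Jayram, and Kerenidis would already give $c = \Omega(\sqrt n)$. The difficulty is that $f_{x,y}$ may be tailored to $(x,y)$, albeit in a restricted way: the $m_A$ component depends only on $y$ and the $m_B$ component only on $x$.

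Next, I would turn this restriction into a quantitative bound via the gadget $g$. I would design $g$ to have a two-source-extractor-like dispersion property: for every $y$ and every set of $x$'s of size $2^{O(c)}$, the distribution of $g(x,y)$ over a uniform $x$ in the set is close to uniform over $[m]$, and symmetrically with the roles of $x$ and $y$ swapped. A natural candidate is an algebraic gadget based on inner product or a matrix-vector product over $\mathbb F_2$. With such a $g$, no short message from a single player can ``point'' to a specific matching $M_{g(x,y)}$. After averaging over $(x,y)$, the joint message $f_{x,y}$ looks, from Charlie's viewpoint, like a generic short function of $z$, and a Fourier decomposition of the success probability into correlations with weight-$2$ characters $\chi_{\ell,r}$ of $z$ would bound the overall advantage by $\poly(c)/n^{\Omega(1)}$. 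Setting this to $\Omega(1)$ then yields the claimed lower bound, with the exponent $1/16$ (rather than $1/2$) arising from the polynomial losses in the reduction and in the gadget-based amplification.

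The main obstacle, and the likely locus of the paper's new technique, is turning the qualitative intuition that ``neither player individually knows which matching will be used'' into a quantitative Fourier bound for a \emph{pair} of one-sided messages. A naive reduction that ``freezes'' one of $x, y$ to fall back on the two-party setting fails because both Alice and Bob see $z$, so the resulting two-party protocol has a non-standard structure outside the scope of the classical HM lower bound. Overcoming this barrier --- ideally through a lemma that bounds the Fourier weight of the pair $(m_A(\cdot,y), m_B(\cdot,x))$ on the set of edge-characters associated with $M_{g(x,y)}$, using the dispersion of $g$ to average away the player-specific ``targeting'' --- is precisely the new ingredient the authors advertise as being of independent interest, and its delicacy is what ultimately dictates the final exponent.
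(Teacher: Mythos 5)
Your proposal is genuinely different from the paper's proof, and it has a concrete gap at its core. A brief comparison, then the gap.

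The paper does not use Fourier analysis, does not use an extractor-style gadget, and does not bound the success probability via correlations with weight-two characters. Instead it (i) defines a \emph{rich} gadget $g$ with the property that for \emph{every} target set $Q\subseteq[m]$ of size $\alpha=\lfloor\sqrt m\rfloor$ there exist small sets $S,T$ with $|S|=|T|=\sqrt\alpha$ and $g(S,T)=Q$ (this is constructed by brute force, assigning disjoint $(S_Q,T_Q)$ for each $Q$ --- it is essentially the \emph{opposite} of a pseudorandom gadget, tailored surjectivity rather than dispersion); (ii) applies a \emph{local independentization} step, where Alice and Bob each enumerate all $\sqrt\alpha$ possible values of the NOF input they can see, making the (blown-up by a factor $\sqrt\alpha=m^{1/4}$) messages functions of $z$ alone; and (iii) proves an $\Omega(m^{5/16})$ mutual-information lower bound $\mathrm I(\rv Z:\rv{\Pi^*})$ on the transformed protocol using a Fano-type entropy bound from Bar-Yossef--Jayram--Kerenidis together with a purely combinatorial lemma about ``$\beta$-periodic-free'' subsets of $\mathbb Z_m$. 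The $1/16$ exponent is exactly $5/16 - 1/4$. Your proposal catches the right starting observation (the combined message has the product structure $m_A(z,y)\times m_B(z,x)$, and if it were a single function of $z$ the two-party HM bound would finish) but then diverges entirely. The local independentization step is the cleanest way to actually realize that wish, and it is absent from your plan.

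The concrete gap, which you partially acknowledge yourself, is that your ``two-source-extractor-like dispersion'' property is stated in a form that cannot be achieved. You require that for \emph{every} $y$ and \emph{every} set of $x$'s of size $2^{O(c)}$, $g(\cdot,y)$ maps the uniform distribution on that set close to uniform over $[m]$. For fixed $y$ this asks $g(\cdot,y)$ to be a deterministic (seedless) extractor for all sources of min-entropy $O(c)$, which is impossible: an adversarial set of $x$'s can always be chosen so that $g(\cdot,y)$ is constant on it, e.g.\ take any fiber $g(\cdot,y)^{-1}(i)$. Algebraic gadgets like inner product do give two-source extraction, but only when \emph{both} sources have high min-entropy; they give nothing against one fixed input and one adversarially chosen small set. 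Beyond this, you explicitly defer the key quantitative lemma --- bounding the Fourier weight of the message pair on the edge-characters of $M_{g(x,y)}$ --- to ``the paper's new technique,'' so the argument as written does not close. The paper's route sidesteps the whole issue: richness lets the prover \emph{choose} $S,T$ to target a combinatorially hard $Q$, rather than arguing that no short message can ``point'' at the matching.
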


\noindent
When \( g \) is clear from context, we simplify \( \GHM \circ g \) to \( \GHM \).

\section{Preliminaries}
We begin by fixing some notation. The set of integers $\{0, \ldots, n-1\}$ is denoted by $[n]$. We use capital letters like $X$ to denote sets, and bold symbols like $\rv{X}$ to denote random variables. In particular, for a set $X$, we use $\rv{X}$ to denote the random variable that is uniformly distributed over the set $X$.

A \textit{search problem} is a relation $S \subseteq X \times Y \times Z \times Q$, where $Q$ is the set of possible solutions. On input $(x, y, z) \in X \times Y \times Z$, the goal is to find a solution $q \in Q$ such that $(x, y, z, q) \in S$. Note that for the (Gadgeted) Hidden Matching Problem, all inputs are guaranteed to have at least one solution.

\subsection{Simultaneous Number-on-Forehead Model}  
In the three-party simultaneous NOF communication complexity model, Alice, Bob, and Charlie collaborate to compute a search problem \( S\subseteq X \times Y \times Z \times Q\). Their inputs are as follows:  
\begin{itemize}
    \item Alice receives \((y,z) \in Y \times Z\).
    \item Bob receives \((x,z) \in X \times Z\).
    \item Charlie receives \((x,y) \in X \times Y\).
\end{itemize}  

\noindent The randomized SM protocol \(\Pi = (\Pi_A, \Pi_B, \Pi_C)\) proceeds as follows:  
\begin{enumerate}
    \item Alice and Bob simultaneously send messages to Charlie, where Alice’s message \(\Pi_A(y,z,r)\) depends only on the input \((y,z)\) and public randomness $r$ and Bob’s message \(\Pi_B(x,z,r)\) depends only on the input \((x,z)\) and public randomness $r$.
    \item After receiving both messages, Charlie outputs a solution \(q= \Pi_C(\Pi_A(y,z), \Pi_B(x,z), x, y,r) \in Q\).
\end{enumerate}
The protocol $\Pi$ computes $S$ with error  $\epsilon$ if for any $(x,y,z)$, $\Pr_r[(x,y,z,q) \in S]\geq 1-\epsilon$.

\subsection{Basics of Information Theory}\label{sec:ic}
Our proof approach involves several standard definitions and results from information theory, which we now recall.

\begin{definition}[Entropy]
Given a random variable \(\boldsymbol{X}\), the Shannon entropy of \(\boldsymbol{X}\) is defined by 
\[
\ent(\boldsymbol{X}) := \sum_{x} \Pr(\boldsymbol{X} = x) \log \left( \frac{1}{\Pr(\boldsymbol{X} = x)} \right).
\]
For two random variables \(\boldsymbol{X}, \boldsymbol{Y}\), the \textit{conditional entropy} of \(\boldsymbol{X}\) given \(\boldsymbol{Y}\) is defined by 
\[
\ent(\boldsymbol{X} ~|~ \boldsymbol{Y}) := \mathbb{E}_{y \sim \boldsymbol{Y}} \left[ \ent(\boldsymbol{X} \mid \boldsymbol{Y} = y) \right].
\]
\end{definition}

\noindent For \(p \in [0,1]\), the binary entropy function is defined as $H_2(p) = - p \log_2 p - (1 - p) \log_2 (1 - p).$ It is well known that \(H_2(p)\) is a concave function.

\begin{lemma}[Subadditivity of Entropy]\label{subadditivity}
For a list of random variables \(\boldsymbol{X}_1, \boldsymbol{X}_2, \ldots, \boldsymbol{X}_d\), we have:
\[
\ent(\boldsymbol{X}_1, \boldsymbol{X}_2, \ldots, \boldsymbol{X}_d) \leq \sum_{i=1}^{d} \ent(\boldsymbol{X}_i).
\]
\end{lemma}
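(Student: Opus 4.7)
The plan is to prove subadditivity in two standard steps: first establish the chain rule for Shannon entropy, then show that conditioning never increases entropy. Throughout, I treat joint tuples like $(\boldsymbol{X}_1,\ldots,\boldsymbol{X}_{i-1})$ as a single random variable, which is allowed by the definition of entropy.

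First I would prove the chain rule: $\ent(\boldsymbol{X}_1,\ldots,\boldsymbol{X}_d) = \sum_{i=1}^d \ent(\boldsymbol{X}_i \mid \boldsymbol{X}_1,\ldots,\boldsymbol{X}_{i-1})$. The base case $d=2$ comes straight from the definition: writing $\Pr(\boldsymbol{X}_1=x_1,\boldsymbol{X}_2=x_2) = \Pr(\boldsymbol{X}_1=x_1)\cdot\Pr(\boldsymbol{X}_2=x_2 \mid \boldsymbol{X}_1=x_1)$, taking logs, and summing against the joint distribution splits $\ent(\boldsymbol{X}_1,\boldsymbol{X}_2)$ into $\ent(\boldsymbol{X}_1) + \mathbb{E}_{x_1 \sim \boldsymbol{X}_1}[\ent(\boldsymbol{X}_2 \mid \boldsymbol{X}_1 = x_1)]$, which is exactly $\ent(\boldsymbol{X}_1) + \ent(\boldsymbol{X}_2 \mid \boldsymbol{X}_1)$ by the definition given in the preliminaries. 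The general case follows by induction, grouping $(\boldsymbol{X}_1,\ldots,\boldsymbol{X}_{i-1})$ as a single variable at each step.

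Second I would show the key inequality $\ent(\boldsymbol{X} \mid \boldsymbol{Y}) \leq \ent(\boldsymbol{X})$ for any pair of random variables. By definition, $\ent(\boldsymbol{X} \mid \boldsymbol{Y}) = \mathbb{E}_{y \sim \boldsymbol{Y}}[\ent(\boldsymbol{X} \mid \boldsymbol{Y}=y)]$, while the marginal distribution satisfies $\Pr(\boldsymbol{X}=x) = \mathbb{E}_{y \sim \boldsymbol{Y}}[\Pr(\boldsymbol{X}=x \mid \boldsymbol{Y}=y)]$. Since the map $p \mapsto -p \log p$ is concave on $[0,1]$ (the same concavity fact that the paper already notes for $H_2$), applying Jensen's inequality coordinatewise to each $x$ and summing yields
\[
\ent(\boldsymbol{X}) = \sum_x -\Pr(\boldsymbol{X}=x)\log\Pr(\boldsymbol{X}=x) \geq \mathbb{E}_{y \sim \boldsymbol{Y}}\!\left[\sum_x -\Pr(\boldsymbol{X}=x \mid \boldsymbol{Y}=y)\log\Pr(\boldsymbol{X}=x \mid \boldsymbol{Y}=y)\right],
\]
and the right-hand side equals $\ent(\boldsymbol{X} \mid \boldsymbol{Y})$.

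Combining the two steps finishes the proof: applying the chain rule and then dropping the conditioning on each term via $\ent(\boldsymbol{X}_i \mid \boldsymbol{X}_1,\ldots,\boldsymbol{X}_{i-1}) \leq \ent(\boldsymbol{X}_i)$ gives $\ent(\boldsymbol{X}_1,\ldots,\boldsymbol{X}_d) \leq \sum_{i=1}^d \ent(\boldsymbol{X}_i)$. There is no real obstacle here since this is textbook material; the only subtle point worth being careful about is the Jensen step in the second lemma, where one must apply concavity of $-p\log p$ pointwise in $x$ before summing, rather than trying to invoke concavity of the full entropy functional on a probability simplex of unbounded dimension.
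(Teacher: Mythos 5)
Your proof is correct: the chain rule plus the pointwise Jensen argument for $\ent(\boldsymbol{X}\mid\boldsymbol{Y})\leq\ent(\boldsymbol{X})$ is the standard textbook derivation of subadditivity. The paper states this lemma as a known preliminary fact without proof, so there is nothing to compare against; your write-up simply supplies the standard argument the authors omit, and it is sound.
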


\begin{definition}[Mutual Information]
The mutual information between joint random variables $\boldsymbol{X}$ and $\boldsymbol{Y}$ is defined as
\[
\mathrm{I}(\boldsymbol{X}; \boldsymbol{Y}) = \ent(\boldsymbol{X}) - \ent(\boldsymbol{X} | \boldsymbol{Y}),
\]
\end{definition}

\begin{lemma}[Data Processing Inequality]\label{data_process}
Consider random variables \(\rv{X}, \rv{Y}, \rv{Z}\) forming a Markov chain \(\rv{X} \rightarrow \rv{Y} \rightarrow \rv{Z}\). Then, the mutual information satisfies:
\[
\mathrm{I}(X; Y) \geq \mathrm{I}(X; Z).
\]
\end{lemma}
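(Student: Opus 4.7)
The plan is to derive the inequality from two applications of the chain rule for mutual information combined with the Markov chain hypothesis. First, I would expand the joint mutual information $\mutual{\rv{X}}{(\rv{Y}, \rv{Z})}$ in two different ways. The chain rule for mutual information yields the pair of identities
\[
\mutual{\rv{X}}{(\rv{Y}, \rv{Z})} \;=\; \mutual{\rv{X}}{\rv{Y}} + \condmutual{\rv{X}}{\rv{Z}}{\rv{Y}} \;=\; \mutual{\rv{X}}{\rv{Z}} + \condmutual{\rv{X}}{\rv{Y}}{\rv{Z}}.
\]
Both of these follow by writing mutual information as a difference of entropies and applying the entropy chain rule $\ent(\rv{Y},\rv{Z}) - \ent(\rv{Y},\rv{Z}\mid \rv{X}) = \ent(\rv{Y}) - \ent(\rv{Y}\mid \rv{X}) + \ent(\rv{Z}\mid \rv{Y}) - \ent(\rv{Z}\mid \rv{X},\rv{Y})$ (and symmetrically with the roles of $\rv{Y}$ and $\rv{Z}$ swapped).

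Second, I would invoke the Markov chain assumption $\rv{X}\to\rv{Y}\to\rv{Z}$, which by definition means that $\rv{X}$ and $\rv{Z}$ are conditionally independent given $\rv{Y}$. This immediately gives $\condmutual{\rv{X}}{\rv{Z}}{\rv{Y}} = 0$. Substituting into the two chain-rule expansions from the previous step,
\[
\mutual{\rv{X}}{\rv{Y}} \;=\; \mutual{\rv{X}}{\rv{Z}} \;+\; \condmutual{\rv{X}}{\rv{Y}}{\rv{Z}}.
\]

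Third, the proof is closed by appealing to the non-negativity of conditional mutual information, namely $\condmutual{\rv{X}}{\rv{Y}}{\rv{Z}} \geq 0$, which gives the desired bound $\mutual{\rv{X}}{\rv{Y}} \geq \mutual{\rv{X}}{\rv{Z}}$. The non-negativity itself follows from Jensen's inequality applied to the logarithm (equivalently, from non-negativity of KL divergence averaged over $\rv{Z}$), and can be stated as a one-line auxiliary fact if needed. There is no genuine obstacle in this argument; the only subtlety worth flagging is to make sure the Markov condition is precisely translated into the conditional-independence statement $\Pr[\rv{Z}=z \mid \rv{X}=x,\rv{Y}=y] = \Pr[\rv{Z}=z \mid \rv{Y}=y]$ that is used to vanish the conditional mutual information term.
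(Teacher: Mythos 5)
Your proof is correct and is the standard chain-rule argument for the data processing inequality; the paper itself states this lemma as a recalled standard fact and gives no proof, so there is nothing to compare against. The two expansions of $\mutual{\rv{X}}{(\rv{Y},\rv{Z})}$, the identification of the Markov condition with $\condmutual{\rv{X}}{\rv{Z}}{\rv{Y}}=0$, and the appeal to non-negativity of conditional mutual information are all accurate and complete.
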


\begin{definition}[Hamming Distance]
Let $x = (x_1, \ldots, x_n), y = (y_1, \ldots, y_n )\in \{0,1\}^n$ be two strings. Their \emph{Hamming distance} \(d_H(x, y)\) is defined as:
\[
d_H(x, y) := |\{i: x_i\neq y_i\}|.
\]
\end{definition}

\section{The Randomized Lower Bound}\label{sec:proof}

We prove the our main theorem in this section. We first recall the statement.

\begin{theorem}[Theorem \ref{main_theorem} restated]
There is a gadget $g$. Any randomized simultaneous NOF protocol that computes $\GHM\circ g$ with an error probability less
than $1/8$ requires $\Omega(n^{1/16})$ bits of communication.
\end{theorem}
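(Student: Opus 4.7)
The plan is to reduce the three-party simultaneous NOF lower bound for $\GHM\circ g$ to the classical $\Omega(\sqrt{n})$ two-party one-way randomized lower bound for the Hidden Matching Problem of~\cite{bar2004exponential}, via an asymmetric sparse hard distribution together with a protocol ``symmetrization'' that costs only an $n^{1/4}$ factor in communication.

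The first step is to set up a hard asymmetric distribution $\mu(S,T)$ together with a matching-friendly gadget $g$. I would sample $z$ uniformly from $\{0,1\}^n$, and independently pick random coordinate sets $S, T \subseteq [n]$ of size $n^{1/4}$; then draw $x$ and $y$ uniformly among strings supported on $S$ and $T$ respectively. Using a probabilistic argument about bipartite graphs, I would choose $g$ so that on $\mu(S,T)$ the induced matching $M_{g(x,y)}$ is genuinely jointly determined by both $x$ and $y$, and so that the family $\{M_{g(x,y)}\}$ spans $\Omega(n^{5/16})$ distinct useful edges in a sufficiently pseudorandom sense. This richness will power the one-way lower bound later.

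The second step is to transform any simultaneous NOF protocol $\Pi=(\Pi_A,\Pi_B,\Pi_C)$ of cost $C$ into a \emph{locally independent} protocol $\Pi^\star$ of cost at most $n^{1/4}\cdot C$ whose messages, conditioned on $\mu(S,T)$, depend on $(x,y)$ only through $(S,T)$. Intuitively, because Alice's view of $y$ lies in an $n^{1/4}$-dimensional space, one can have her broadcast messages corresponding to a small representative set of candidate $y$-values (and symmetrically for Bob on $x$), with Charlie, who holds both $x$ and $y$, selecting the correct message at decoding time. After this transformation, the problem essentially becomes a two-party one-way task in which Alice and Bob collectively send a message about $z$ to Charlie, who holds $M_{g(x,y)}$ and must output an edge of it along with the parity of the corresponding bits of $z$.

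The final step is to prove that the one-way randomized information complexity of this reduced task under $\mu(S,T)$ is $\Omega(n^{5/16})$ by adapting the Fourier-analytic Hidden Matching argument of~\cite{bar2004exponential} to the asymmetric sparse regime; the exponent $5/16$, rather than the classical $1/2$, reflects the loss from restricting to $n^{1/4}$-sparse inputs. Combining the three steps yields
\[
n^{1/4}\cdot C \;\ge\; \CC(\Pi^\star) \;\ge\; \Omega\!\bigl(n^{5/16}\bigr),
\]
hence $C=\Omega(n^{1/16})$, as required. I expect the main obstacle to be this final step: the Fourier lower bound must be made robust under the sparse asymmetric distribution, and the combinatorial construction of $g$ must simultaneously guarantee the diversity required for this lower bound \emph{and} the $O(\log n)$ simultaneous quantum upper bound.
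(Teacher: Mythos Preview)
Your three-step outline (sparse hard distribution, local independentization costing a factor $n^{1/4}$, then an $\Omega(n^{5/16})$ information lower bound) is exactly the skeleton of the paper's proof, and the arithmetic $n^{5/16}/n^{1/4}=n^{1/16}$ is right. But two of your steps, as written, do not go through.

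\textbf{The hard distribution and the independentization blowup.} You take $S,T\subseteq[n]$ to be \emph{coordinate} sets of size $n^{1/4}$ and then draw $x,y$ uniformly among strings supported on $S,T$. With that choice there are $2^{n^{1/4}}$ candidate $y$-values, so having Alice ``broadcast messages corresponding to a small representative set of candidate $y$-values'' costs $2^{n^{1/4}}\cdot C$, not $n^{1/4}\cdot C$; your inequality $\CC(\Pi^\star)\le n^{1/4}\cdot C$ is unjustified. The paper instead takes $S,T\subseteq\{0,1\}^n$ to be sets of $\sqrt{\alpha}\approx n^{1/4}$ \emph{strings} (not coordinates), and Alice literally enumerates $\Pi_A(u,z)$ over the $|T|=n^{1/4}$ strings $u\in T$ (symmetrically Bob over $S$). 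That is what makes the blowup linear in $|T|$. The gadget is then required to be ``$\alpha$-rich'': for \emph{every} $Q\subseteq[m]$ of size $\alpha\approx\sqrt{m}$ there exist such small $S,T$ with $g(S\times T)=Q$. This richness is what lets one \emph{choose} $Q$ with good combinatorial properties below and then pick $S,T$ accordingly.

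\textbf{The $\Omega(n^{5/16})$ step.} You propose to adapt the Fourier-analytic argument of \cite{bar2004exponential}. The paper does not do this, and it is not clear such an adaptation works here: after independentization, Charlie still holds $(x,y)$ and may pick which edge of $M_{g(x,y)}$ to answer on, so the obstruction is that his outputs might all land on very few vertices of $[n]$, making the parity bits highly dependent. The paper's actual engine is a purely additive-combinatorial lemma tied to the specific shift structure $M_i=\{(\ell,\,m+((i+\ell)\bmod m))\}$: one shows there exists $Q\subseteq[m]$ of size $\sqrt{m}$ that is ``$\beta$-periodic-free'' (no large $A\subseteq Q$ with $A+b\subseteq Q$ for some $b\neq 0$), and from this deduces that for any large $Q'\subseteq Q$ and any choices $\ell_i$, either $\{\ell_i\}$ or $\{\ell_i+q_i\bmod m\}$ has $\Omega(m^{5/16})$ distinct elements. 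This forces Charlie's output edges to touch $\Omega(m^{5/16})$ distinct coordinates of $z$, after which a Fano-type inequality (not Fourier) gives the entropy loss. Your proposal contains no analogue of this periodic-free construction, and your phrase ``probabilistic argument about bipartite graphs'' is too vague to substitute for it; this is the step where the $5/16$ actually comes from.
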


To prove the randomized communication lower bound, we first describe the gadget function and a hard input distribution for the communication problem.

\subsection{The Gadgets and Hard Distributions}\label{gadget}
For $n>0$, we set $m=n/2$ to be a prime \footnote{Using a prime number when proving the lower bound is reasonable, because for every integer $n$ there exists a prime in the interval $[n/2,n]$.
} and $\alpha = \lfloor\sqrt{m}\rfloor$ in our proof.
\begin{definition}[Rich gadgets]
\label{gadget_function}
For $\alpha > 0$, we say that a gadget $g: \{0,1\}^n \times \{0,1\}^n \rightarrow [m]$ is \emph{$\alpha$-rich} if for every subset $Q \subseteq [m]$ with $|Q| = \alpha$, there exist sets $S, T \subseteq \{0,1\}^n$ with $|S| = |T| = \sqrt{\alpha}$ such that
\[
\{g(x,y) : x \in S, y \in T\} = Q.
\]
\end{definition}

\noindent It is not difficult to construct rich gadgets under our parameter choices.  
Note that the total number of subsets $Q \subseteq [m]$ of size $|Q| = \alpha$ is at most
\[
\binom{m}{\alpha} \leq m^{\sqrt{m}} = 2^{o(n)}.
\]
Therefore, we can assign disjoint sets $S_Q$ and $T_Q$ for each $Q$ and enforce that $g(S_Q, T_Q) = Q$. In the following proof, we fix the gadget $g$ to be any $\alpha$-rich gadget. We then define the hard distributions.

\begin{definition}
Given $S, T \subseteq \{0,1\}^n$ with $|S| = |T| = \sqrt{\alpha}$, we define the distribution $\mu(S,T)$ on $\{0,1\}^n \times \{0,1\}^n \times \{0,1\}^n$ as follows:
\begin{itemize}
    \item Uniformly sample $x\in S$, $y\in T$ and $z \in \{0,1\}^n$.
    \item Output the triple $(x, y, z)$.
\end{itemize}
\end{definition}

In our proof, we show that there exists a pair $(S, T)$ such that $\mu(S, T)$ is hard for any protocol. The choice of $S$ and $T$ depends on the gadget $g$.

\subsection{Local Independent Protocols}\label{symmetric}

To simplify our analysis, we first apply the following \emph{local independentization} process to any simultaneous NOF protocol.

\begin{definition}[Local independent protocols]
For fixed sets $S,T \subseteq \{0,1\}^n$ with $|S|=|T|=\sqrt{\alpha}$, and any protocol $\Pi$ for $\GHM \circ g$ under the distribution $\mu(S,T)$, we define its \emph{local independentized} version $\Pi^*$ as follows:
\begin{itemize}
    \item For any $(y,z) \in T \times Z$, $\Pi^{*}_A(y,z)$ outputs the tuple $\left( \Pi_A(u,z) \right)_{u \in T}$.
    \item For any $(x,z) \in S \times Z$, $\Pi^{*}_B(x,z)$ outputs the tuple $\left( \Pi_B(v,z) \right)_{v \in S}$.
    \item $\Pi_C^*(x,y,\Pi^{*}_A(y,z),\Pi^{*}_B(x,z))$ outputs the same value as $\Pi_C(x,y,\Pi_A(y,z),\Pi_B(x,z))$.
\end{itemize}
\end{definition}

The high-level intuition behind the local independentization process is that both Alice and Bob enumerate all their possible inputs and output all corresponding transcripts. We observe the following useful property of locally independent protocols.

\begin{claim}\label{symmetric_property}
Let $\Pi$ be a deterministic protocol for $\GHM \circ g$ with $\delta$-error under the distribution $\mu(S,T)$. Then the following statements hold:
\begin{enumerate}
    \item\label{1} The protocol $\Pi^*$ is a deterministic protocol for $\GHM \circ g$ with $\delta$-error under the distribution $\mu(S,T)$.
    \item\label{2} The communication complexity $\Pi^{*}$ is $\sqrt{\alpha} \cdot \mathrm{CC}(\Pi)$.
    \item\label{3} Under the distribution $\mu(S,T)$, the messages $\Pi_A^*(y,z)$ and $\Pi_B^*(x,z)$ depend only on $z$. That is, they are independent of $y$ and $x$, respectively.
\end{enumerate}
\end{claim}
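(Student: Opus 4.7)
The plan is to verify the three items essentially by unwinding the definition of $\Pi^{*}$. The key observation is that in the three-party NOF model with $x$ on Alice's forehead, $y$ on Bob's forehead, and $z$ on Charlie's forehead, Charlie sees both $x$ and $y$; this lets him index into Alice's and Bob's enumerated tuples to recover the exact messages that the original protocol $\Pi$ would have transmitted on input $(x,y,z)$.

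I would dispatch items \ref{2} and \ref{3} first, since both are immediate from the construction. For item \ref{3}, note that $\Pi^{*}_A(y,z) = (\Pi_A(u,z))_{u\in T}$ enumerates $u$ over the fixed set $T$, so the resulting tuple does not depend on which specific $y \in T$ Alice holds---only on $z$; the same argument applies symmetrically to $\Pi^{*}_B$, which depends only on $z$. For item \ref{2}, each coordinate of Alice's output tuple is one of her original messages, whose length is at most $\mathrm{CC}(\Pi)$, and there are $|T|=\sqrt{\alpha}$ coordinates; the analogous bound for Bob yields a total of at most $\sqrt{\alpha}\cdot \mathrm{CC}(\Pi)$ bits.

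For item \ref{1}, my argument is that $\Pi^{*}$ simulates $\Pi$ pointwise on every input in $\supp(\mu(S,T))$. The support guarantees $y\in T$ and $x\in S$, so Charlie---who holds both $x$ and $y$---can read off $\Pi_A(y,z)$ from the $y$-th coordinate of $\Pi^{*}_A(y,z)$ and $\Pi_B(x,z)$ from the $x$-th coordinate of $\Pi^{*}_B(x,z)$. The defining clause for $\Pi^{*}_C$ then sets Charlie's output equal to $\Pi_C(x,y,\Pi_A(y,z),\Pi_B(x,z))$, which is precisely the output of the original protocol, so the error probability under $\mu(S,T)$ is preserved exactly.

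I do not anticipate any genuine obstacle here---the proof is a bookkeeping exercise that converts per-input messages into enumerated tables on index sets of size $\sqrt{\alpha}$. The conceptual weight of the lemma lies downstream: item \ref{3} strips all dependence of Alice's and Bob's messages on their respective inputs $y$ and $x$, effectively turning the simultaneous three-party protocol into something structurally equivalent to a one-way two-party protocol whose sole input is $z$. The $\sqrt{\alpha}$ blow-up in item \ref{2} is the price paid for this structural simplification and will later have to be dominated by whatever information-complexity lower bound one can extract from the $\alpha$-richness of $g$.
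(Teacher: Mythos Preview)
Your proposal is correct and matches the paper's intended reasoning: the paper explicitly omits the proof, stating it ``is straightforward,'' and your unwinding of the definition of $\Pi^{*}$ is exactly the routine verification the authors had in mind. Your additional remarks on the conceptual role of item~\ref{3} (collapsing the protocol to a one-way object in $z$) and the $\sqrt{\alpha}$ blow-up in item~\ref{2} accurately anticipate how the claim is used downstream in Theorem~\ref{Information_complexity}.
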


\noindent The third item significantly simplifies our analysis. The proof of this claim is straightforward and is omitted here.

\subsection{Information Complexity of Local Independent Protocols}\label{encoding}

Now we are ready to prove the main result. Our proof is based on lower bounding the information complexity. However, instead of analyzing the standard information complexity $\mathrm{I}(\rv{Z} : \rv{\Pi})$, we lower bound the information complexity for the local independent protocols.

\begin{theorem}\label{Information_complexity}
Let $g$ be an $\alpha$-rich gadget. There exist sets $S, T \subseteq \{0,1\}^n$ with $|S| = |T| = \sqrt{\alpha}$ such that for any protocol $\Pi$ of $\GHM \circ g$ under the distribution $\mu(S,T)$ with error $1/16$, we have
\[
\mathrm{I}(\rv{Z} : \rv{\Pi^{*}}) = \Omega(m^{5/16}).
\]
\end{theorem}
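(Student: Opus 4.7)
The plan is to lower bound $\mathrm{I}(\rv{Z} : \rv{\Pi^*})$ by reducing to a one-way problem via the local-independence property, and then extracting many approximately-computed parities of $\rv{Z}$ from the transcript. By Claim~\ref{symmetric_property}(3), both $\Pi_A^*(y,z)$ and $\Pi_B^*(x,z)$ depend on $z$ alone, so $\Pi^*(Z)$ may be viewed as a single one-way message sent by a party holding $Z$ to Charlie, who decodes using $(x,y)$. Since $Z$ is uniform on $\{0,1\}^n$ under $\mu(S,T)$ and, by $\alpha$-richness, we can arrange $g(S \times T) = Q$ for any preselected $Q \subseteq [m]$ of size $\alpha$, Charlie is effectively solving the Hidden Matching problem restricted to the cyclic matchings $\{M_i\}_{i \in Q}$, with $i := g(x,y)$ (essentially) uniform on $Q$.

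First I would formalize the predictions. For each transcript $\tau$ in the support of $\rv{\Pi^*}$ and each $i \in Q$, Charlie's (randomized) output induces a predicted edge $(\ell_i(\tau), r_i(\tau)) \in M_i$ and a claimed bit $\hat b_i(\tau)$, which, averaged over $(Z,x,y)\sim\mu$ and the protocol's randomness, is correct with probability at least $15/16$. Collecting the $\alpha$ predicted edges yields a bipartite ``prediction graph'' $G_\tau$ on $[m] \sqcup [m]$ with exactly $\alpha$ distinct edges, since the cyclic matchings $M_i$ for different $i$ are pairwise edge-disjoint.

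Next, I would exploit the combinatorial freedom in selecting $Q$ (and hence $S,T$) to control the structure of $G_\tau$ uniformly over all $\tau$. Short cycles in any selection graph $G_\phi$ correspond to additive identities among elements of $Q$: a $4$-cycle through edges indexed by $i_1,i_2,i_3,i_4$ forces $i_1 + i_4 \equiv i_2 + i_3 \pmod m$, and longer cycles yield analogous alternating-sum relations. Choosing $Q$ with a Sidon (or $B_k^+$) structure in $\mathbb{Z}_m$ eliminates short cycles and keeps the $\alpha$ edge-indicator vectors in $\mathbb{F}_2^n$ of high rank. Combining this with a matching/star dichotomy---if $G_\tau$ has maximum degree below a chosen threshold, a greedy argument gives a large matching; otherwise a high-degree vertex yields a large star---one should extract $k = \Omega(m^{5/16})$ pairwise linearly independent parity characters $\chi_{\ell,r}(z) = z_\ell \oplus z_r$ that $\Pi^*$ approximately computes. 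In both the matching and the star configurations the $k$ predicted parities are jointly uniform as functions of uniform $Z$, so Fano's inequality---applied after a Markov-type restriction to the parities whose individual error is $O(1)$-small---gives $\ent(\rv{Z} \mid \rv{\Pi^*} = \tau) \le n - \Omega(k)$; averaging over $\tau$ yields $\mathrm{I}(\rv{Z} : \rv{\Pi^*}) = \Omega(m^{5/16})$.

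The main obstacle is the combinatorial choice of $Q$: an adversarial $\phi$ can try to distribute the $\alpha$ edges across many medium-degree vertices arranged in short cycles that simultaneously shrink the matching number and collapse the rank of the character span. The combinatorial theorem must certify that no such configuration survives the additive constraints placed on $Q$, and the specific exponent $5/16$ should emerge from balancing the Sidon-density limit in $\mathbb{Z}_m$ against the matching-versus-star trade-off for bipartite graphs on $2m$ vertices with $\sqrt{m}$ edges. Verifying this uniformly over all $\tau$---and propagating the structural guarantee through a noisy Fano bound with the right dependence on the error $1/16$---will be the technically delicate heart of the proof.
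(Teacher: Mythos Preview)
Your high-level plan is right and matches the paper: use local independence to make $(\Pi_A^*,\Pi_B^*)$ a function of $z$ alone, fix a good transcript $\tau$ via Markov, restrict (again by Markov) to a large subset $Q'\subseteq Q$ of matchings on which Charlie's bit has error at most $1/4$, build the prediction graph $G_\tau$ on $[m]\sqcup[m]$, extract $k$ edges whose parity characters $z_{\ell}\oplus z_r$ are linearly independent, and finish with a Fano-type bound. Your Sidon idea for $Q$ is also on target; Sidon is a strictly stronger variant of the paper's ``$\beta$-periodic-free'' property (Sidon gives $\beta=1$).

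The gap is the extraction step. A matching/star dichotomy cannot reach $m^{5/16}$: take $G_\tau$ to be a disjoint union of $m^{1/4}$ left-centered stars, each of size $m^{1/4}$. This graph is $C_4$-free (hence fully consistent with a Sidon $Q$), has maximum degree $m^{1/4}$, and matching number $m^{1/4}$. So your dichotomy yields only $k=\Theta(m^{1/4})$, and after dividing by $\sqrt{\alpha}=m^{1/4}$ in Claim~\ref{symmetric_property}(2) you get a trivial bound on $\mathrm{CC}(\Pi)$. Note this is exactly the ``medium-degree'' configuration you worried about, and no additive constraint on $Q$ rules it out, since a forest of stars has no cycles at all.

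The paper's fix (Lemma~\ref{bipartite}) is to replace ``matching vs.\ star'' by ``number of distinct left endpoints vs.\ number of distinct right endpoints.'' A set of edges with pairwise distinct left endpoints (or pairwise distinct right endpoints) already has linearly independent indicator vectors, so joint uniformity holds without needing a true matching. The lemma shows that if only $t<m^{5/16}$ distinct left values occur, then the right-endpoint sets $R_j=\{c_j+q:q\in Q',\ \ell_q=c_j\}$ have pairwise intersections of size at most $\beta$ (this is exactly what periodic-free/Sidon buys), and an inclusion--exclusion over the largest $j$ of them forces $|R|\ge\Omega(m^{5/16})$. In the bad example above, $t=m^{1/4}$ but $|R|=m^{1/2}$, so the lemma picks the right side. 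With your Sidon choice this argument goes through with $\beta=1$; equivalently, since Sidon makes $G_\tau$ $C_4$-free, K\H{o}v\'ari--S\'os--Tur\'an applied to the induced bipartite graph on the $t+s$ touched vertices gives $r\le O(\min(t\sqrt{s},s\sqrt{t}))$, hence $\max(t,s)\ge\Omega(m^{1/3})\ge m^{5/16}$ directly. Either route closes the gap; the matching/star trade-off by itself does not.
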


We note that our main theorem (Theorem \ref{main_theorem}) is a direct consequence of Theorem \ref{Information_complexity}. By Theorem \ref{Information_complexity}, we know the communication complexity of $\Pi^{*}$ is at least $\Omega(m^{5/16})$, and hence the communication complexity of $\Pi$ is at least $\Omega(m^{5/16}/\sqrt{\alpha}) = \Omega(m^{1/16})$.

\begin{proof}[Proof of Theorem \ref{Information_complexity}]
The proof proceeds by carefully constructing the sets $S$ and $T$ based on the gadget function $g$ and the combinatorial structure provided by the following lemma.

\begin{lemma}
\label{bipartite}
There is a sequence \( Q \in [m]^{\alpha} \) consisting of distinct elements such that for any subsequence \( Q' = (q_1, \dots, q_r) \subseteq Q \) with \( r \geq \alpha/2 \), the following holds: for any sequence \( L = (\ell_1, \dots, \ell_r) \), either
\begin{itemize}
    \item \( L \) contains at least \( \Omega(m^{5/16}) \) distinct elements, or
    \item the set \( R := \left\{ m + ((\ell_i + q_i) \bmod m) ~|~ 1 \leq i \leq r \right\} \) satisfies \( |R| \geq \Omega(m^{5/16}) \).
\end{itemize}
\end{lemma}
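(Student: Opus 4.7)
The plan is to apply the probabilistic method after a short combinatorial reduction. Set $k := c \cdot m^{5/16}$ for a small constant $c > 0$. Suppose a subsequence $Q' = (q_1, \ldots, q_r)$ of $Q$ with $r \geq \alpha/2$, together with a sequence $L = (\ell_1, \ldots, \ell_r)$, violates both alternatives of the lemma. Then the set of $\ell_i$'s is contained in some $A \subseteq [m]$ with $|A| < k$, and the set of values $(\ell_i + q_i) \bmod m$ is contained in some $B \subseteq [m]$ with $|B| < k$. Consequently each $q_i$ lies in the Minkowski difference $(B - A) \bmod m$, which has at most $k^2$ elements. Because $Q$ consists of distinct elements, so do the $q_i$'s, giving $|Q \cap (B - A)| \geq r \geq \alpha/2$. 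Thus it suffices to construct a sequence $Q$ of $\alpha$ distinct elements of $[m]$ such that $|Q \cap (B - A)| < \alpha/2$ for every pair $A, B \subseteq [m]$ with $|A|, |B| \leq k$.

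I would then sample $Q$ uniformly at random among sequences of $\alpha$ distinct elements of $[m]$ and argue by a union bound. For any fixed pair $(A, B)$, the set $C := (B - A) \bmod m$ satisfies $|C| \leq k^2$, and a union bound over which $\alpha/2$ coordinates of $Q$ could land in $C$ yields
\[
\Pr\bigl[|Q \cap C| \geq \alpha/2\bigr] \;\leq\; \binom{\alpha}{\alpha/2}\Bigl(\tfrac{k^2}{m}\Bigr)^{\alpha/2} \;\leq\; 2^{\alpha}\,m^{-3\alpha/16},
\]
using $\alpha = \sqrt{m}$ and $k^2/m = c^2 m^{-3/8}$. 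For $c$ small enough this is $2^{-\Omega(\sqrt{m}\log m)}$. The number of pairs $(A, B)$ with $|A|, |B| \leq k$ is at most $\binom{m}{k}^2 \leq 2^{O(m^{5/16}\log m)}$, and since $5/16 < 1/2$ the combined failure probability is $2^{O(m^{5/16}\log m) - \Omega(\sqrt{m}\log m)} = o(1)$. Hence a good $Q$ exists, and together with the reduction this proves the lemma.

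The main obstacle is really just the balancing of exponents: one needs $\alpha \cdot k^2/m$ (the expected intersection $|Q \cap (B-A)|$) to be much smaller than $\alpha/2$, which forces $k^2 \ll m$, while simultaneously keeping $\binom{m}{k}^2$ below the tail bound. The threshold $k = \Theta(m^{5/16})$ hits exactly this balance, and the gap of $m^{3/16}$ between $\sqrt{m}$ and $m^{5/16}$ is what powers the union bound. Notably, no additive-combinatorial structure of $\mathbb{Z}_m$ is used beyond the trivial inequality $|B-A| \leq |A|\,|B|$; the Minkowski difference simply encodes the constraint $q_i \equiv b - \ell_i \pmod m$ forced by failure of both alternatives.
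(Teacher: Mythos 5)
Your proof is correct, and it takes a genuinely different route from the paper. The paper's argument first introduces the notion of a $\beta$-periodic-free set (no subset $A \subseteq Q$ of size $\geq \beta$ has $A + b \subseteq Q$ for a nonzero shift $b$), shows via a Chernoff/negative-association argument that a random $Q$ of size $\alpha = \sqrt{m}$ is $8\log m$-periodic-free with positive probability, and then deduces the lemma by partitioning $R$ into classes $R_1, \dots, R_t$ according to the distinct values of $L$, observing that periodic-freeness caps each pairwise intersection $|R_i \cap R_j|$ at $\beta$, and applying inclusion--exclusion with a tuned choice of how many classes to keep. You instead encode the simultaneous failure of both alternatives directly: if $L$ has $<k$ distinct values, contained in $A$, and $R$ has $<k$ values, yielding a set $B$, then every $q_i$ lies in the Minkowski difference $(B-A) \bmod m$, which has $<k^2$ elements, forcing $|Q \cap (B-A)| \geq \alpha/2$; a first-moment union bound over all $\binom{m}{\leq k}^2$ pairs $(A,B)$ then shows a random $Q$ avoids this for $k = \Theta(m^{5/16})$. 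Your approach dispenses with the periodicity machinery and the inclusion--exclusion entirely, at the cost of a much larger union bound ($2^{O(m^{5/16}\log m)}$ pairs versus $m-1$ shifts in the paper); the exponent gap between $m^{5/16}$ and $\alpha/2 = \Theta(m^{1/2})$ is exactly wide enough to absorb it, which is why the same threshold $m^{5/16}$ emerges. One small point worth making explicit in a write-up: since $Q$ is sampled as $\alpha$ distinct elements (without replacement), the events $\{q_i \in C\}$ are negatively correlated, so $\Pr[\forall i\in I,\ q_i \in C] \leq (|C|/(m-\alpha))^{|I|}$; since $\alpha = o(m)$ this is $(1+o(1))^{\alpha/2}(k^2/m)^{\alpha/2}$, and the extra factor is a bounded constant, absorbed into the choice of $c$. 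Also, while the lemma states $L$ is an arbitrary sequence, only $\ell_i \bmod m$ matters, so taking $A \subseteq [m]$ in the union bound is legitimate.
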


We defer the proof of Lemma~\ref{bipartite} to Section~\ref{proof_structure} and first prove Theorem~\ref {Information_complexity} by assuming it. Let \(Q\in [m]^{\alpha} \) be the sequence guaranteed by the Lemma \ref{bipartite}. Since $g$ is a $\alpha$-rich gadget, there exist subsets $S, T \subseteq \{0,1\}^n$ with $|S| = |T| = \sqrt{\alpha}$ such that,
\[
\{g(x,y) : x \in S, y \in T\} = Q.
\]

Now we show that this pair $S,T$ is the desired set that satisfies Theorem~\ref{Information_complexity}.
Let $\Pi$ be a deterministic protocol for $\GHM \circ g$ under $\mu(S,T)$, and let $\Pi^*$ be the local independentized version of it.
Our goal is to prove a lower bound on the mutual information $\mathrm{I}(\rv{Z} : \rv{\Pi^{*}})$. Recall that the output of $\Pi^*$ has the form 
\[
\Pi^*(x,y,z) = (\Pi_A^*(z), \Pi_B^*(z), \Pi_C^*(x,y,\Pi_A^*(z),\Pi_B^*(z))).
\]
Here $\Pi_A^*(z)$, $\Pi_B^*(z)$ depend only on $z$ as $\Pi^*$ is locally independent.
The output of $\Pi_C^*$ is a triple $(\ell,r,b)$, where $\ell,r\in [n]$ and $b \in \{0,1\}$.
The output is correct if $(\ell,r)$ is an edge in $M_{g(x,y)}$ and $b = z_\ell \oplus z_r$.

Since Charlie knows the matching $M_{g(x,y)}$ and outputs $\Pi_C^*$, we may assume without loss of generality that $(\ell, r)$ is always an edge in $M_{g(x,y)}$.  
Hence, errors occur only if $b \neq z_\ell \oplus z_r$. The error probability of $\Pi^*$ under the input distribution $\mu(S,T)$ can therefore be written as
\[
\mathcal{E}_{\Pi^{*}}(\rv{S}, \rv{T}, \rv{Z}) = \Pr_{(x,y,z) \sim \mu(S,T)}[\Pi^{*}(x,y,z) \notin \GHM(x,y,z)] = \Pr_{(x,y,z) \sim \mu(S,T)}[b \neq z_\ell \oplus z_r] .
\]
By Claim~\ref{symmetric_property}, we have $\mathcal{E}_{\Pi^{*}}(\rv{S}, \rv{T}, \rv{Z}) \leq 1/16$. Since the messages from Alice and Bob now depend only on $z$, their combined message induces a partition of $Z$. For each message $\tau$, we define
\[
Z_\tau = \{ z ~|~ (\Pi_A^*(z), \Pi_B^*(z)) = \tau \}
\]
and let $\rv{Z}_\tau$ be uniformly distributed over $Z_\tau$. Define the conditional error as
\[
e_{\tau} := \mathcal{E}_{\Pi^{*}}(\rv{S}, \rv{T}, \rv{Z}_{\tau}) = \Pr_{(x,y,z) \sim \mu(S,T)}[\Pi^{*}(x,y,z) \notin \GHM(x,y,z) ~|~ z \in Z_{\tau}] .
\]
Notice that $\mathbb{E}[e_\tau] = \mathcal{E}_{\Pi^{*}}(\rv{S}, \rv{T}, \rv{Z}) \leq 1/16$, then by Markov’s inequality we have 
\[
\Pr[e_\tau \geq 1/8] \leq 1/2.
\]
The following lemma shows that any message $\tau$ with small error reveals a lot of information.

\begin{lemma}\label{entropy_loss}
For every message $\tau$ with $e_{\tau} < 1/8$, we have:
\[
\ent(\rv{Z}) - \ent(\rv{Z} | (\Pi_A^*(z), \Pi_B^*(z))  = \tau) = \Omega(m^{5/16}).
\]
\end{lemma}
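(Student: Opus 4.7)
The plan is to show that the message $\tau$ imposes many XOR constraints on $\rv{Z}$, reducing the entropy of $\rv{Z}_\tau$ by $\Omega(m^{5/16})$. Since $\rv{Z}$ is uniform on $\{0,1\}^n$ we have $\ent(\rv{Z})=n$, so it suffices to prove $\ent(\rv{Z}_\tau) \le n - \Omega(m^{5/16})$.

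First I would extract the XOR constraints implicit in $\tau$. Because $\Pi^*$ is locally independent, fixing $\tau$ fixes $\Pi_A^*$ and $\Pi_B^*$, so Charlie's output $(\ell_{x,y},r_{x,y},b_{x,y})$ becomes a deterministic function of $(x,y)$ alone, and correctness for that pair is exactly the XOR constraint $z_{\ell_{x,y}} \oplus z_{r_{x,y}} = b_{x,y}$. Denoting its failure probability over $\rv{Z}_\tau$ by $p_{x,y}$, the hypothesis $\E_{(x,y)}[p_{x,y}] = e_\tau < 1/8$ combined with Markov's inequality yields at least $\alpha/2$ good pairs with $p_{x,y} < 1/4$; through the bijection $g|_{S\times T} \leftrightarrow Q$ (forced by $|S \times T| = \alpha = |Q|$ together with $\alpha$-richness), these good pairs index a subsequence of $Q$ of length $\ge \alpha/2$.

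Next I would apply Lemma \ref{bipartite} to this subsequence: either the set $L$ of distinct left endpoints or the set $R$ of distinct right endpoints over the good pairs has size $\Omega(m^{5/16})$. Assume WLOG $|L| \ge \Omega(m^{5/16})$; the other case is symmetric with $L$ and $R$ swapped. For each $\ell \in L$, select one good pair realising $\ell$ and write $r_\ell$, $b_\ell$, $\epsilon_\ell$ for the corresponding right endpoint, parity bit, and failure probability (so $\epsilon_\ell < 1/4$); set $R' = \{r_\ell : \ell \in L\}$. Since $L \subseteq [m]$ and $R' \subseteq \{m,\dots,2m-1\}$ are disjoint, the chain rule together with subadditivity of entropy gives
\[
\ent(\rv{Z}_\tau) \le \ent(\rv{Z}_{\tau,[n]\setminus L}) + \ent(\rv{Z}_{\tau,L} ~|~ \rv{Z}_{\tau,R'}) \le (n-|L|) + \sum_{\ell \in L} \ent(z_\ell ~|~ z_{r_\ell}).
\]

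Each term $\ent(z_\ell ~|~ z_{r_\ell}) \le \ent(z_\ell \oplus z_{r_\ell}) = H_2(\epsilon_\ell)$, since $z_\ell \oplus z_{r_\ell}$ takes the value $b_\ell$ with probability $1 - \epsilon_\ell$. Concavity of $H_2$ then gives $\sum_{\ell \in L} H_2(\epsilon_\ell) \le |L| \cdot H_2(\bar\epsilon)$ with $\bar\epsilon < 1/4$, so altogether $\ent(\rv{Z}_\tau) \le n - (1 - H_2(1/4)) \cdot |L| = n - \Omega(m^{5/16})$, as required. The main obstacle is conceptual rather than technical: one must turn the combinatorial guarantee of Lemma \ref{bipartite} into an entropy bound even though each XOR constraint only holds with probability $1 - \epsilon_\ell$ rather than deterministically. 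Once one commits to bounding $\ent(z_\ell ~|~ z_{r_\ell})$ by $H_2(\epsilon_\ell)$, the positive slack $1 - H_2(1/4) > 0$ absorbs the per-constraint error and the remainder is routine information-theoretic bookkeeping.
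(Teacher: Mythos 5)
Your proof is correct and follows essentially the same approach as the paper: Markov's inequality to isolate at least $\alpha/2$ low-error matchings, Lemma~\ref{bipartite} to extract $\Omega(m^{5/16})$ constraints with one per distinct left (or right) endpoint, and a Fano-type entropy argument crediting a loss of $(1-H_2(1/4))$ bits per constraint. The only cosmetic difference is that the paper first isolates the parity vector $\rv{u}_z$, cites an explicit Fano lemma to bound $\ent(\rv{u}_z)\le|A|H_2(1/4)$, and then bounds $\ent(\rv{Z}\mid\rv{u}_z)\le n-|A|$, whereas you split the coordinates as $L$ and $[n]\setminus L$ and inline the same subadditivity-plus-concavity computation via $\ent(z_\ell\mid z_{r_\ell})\le H_2(\epsilon_\ell)$.
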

\noindent By assuming Lemma \ref{entropy_loss} and using the fact that $\Pr[e_{\tau} < 1/8]  \geq 1/2$, we have
\[
\mathrm{I}(\rv{Z}:\rv{\Pi^{*}}) = \Omega(m^{5/16}).
\]
\end{proof}
\noindent Now we focus on the proof of Lemma \ref{entropy_loss}.
\begin{proof}[Proof of Lemma \ref{entropy_loss}]
For a fixed message $\tau$ with $e_{\tau}<1/8$. Recall by Definition \ref{dfn: GHM} that the problem $\GHM\circ g$ is defined on a set of perfect matching $\{M_1,\dots,M_{m}\}$ where 
\[
M_i := \left\{ \left(\ell, m + \left((i + \ell) \bmod m\right)\right) : \ell \in [m] \right\}.
\]
As $S$ and $T$ are now fixed, we are specifically interested in those matching:
\[
\mathcal{M} = \{M_i:\exists x\in S, y\in T, g(x,y) = i\}
\]
Furthermore, since $g$ is $\alpha$-rich, for each $M_i\in \mathcal{M}$, there is a unique pair $x\in S$ and $y\in T$ such that $g(x,y) = i$. We denote it by $(x,y)=g^{-1}(i)$. For each $M_i \in \mathcal{M}$, let 
\[
e_{\tau,M_i} = \mathcal{E}_{\Pi^{*}}(\rv{Z}_{\tau}, M_i):= \Pr_{(x,y,z)\sim\mu(S,T)}\left[\Pi^{*}(x,y,z) \notin \GHM(x,y,z)~|~z\in Z_{\tau}, M_{g(x,y)}=M_i \right]
\] 
Recall that $\mathbb{E}_{M_i}[e_{\tau, M_i}] = e_{\tau} < 1/8$. Then by Markov's inequality again, the set 
\[
\mathcal{M}' := \{ M_i \in \mathcal{M} : e_{\tau, M_i} \leq 1/4 \}
\]
has size at least $|\mathcal{M}'| \geq |\mathcal{M}|/2 = \alpha/2$.

For every $M_i \in \mathcal{M}'$, let $(x, y)=g^{-1}(i)$, and let $(\ell_i, r_i, b_i)$ be the tuple output by $\Pi_{C}^{*}(x, y, \tau)$.  
Recall that Charlie always outputs a pair $(\ell_i, r_i)$ that belongs to the matching defined by $(x, y)$. Hence, it must have the form
\[
r_i = m + \left((\ell_i + i) \bmod m\right).
\]
Let $G_\tau = (L, R, E)$ be a bipartite graph with vertices $L = \{0, 1, \dots, m-1\}$ and $R = \{m, m+1, \dots, 2m-1\}$. An edge connects $\ell \in L$ and $r \in R$ if and only if there exists $(x, y)$ with $M_{g(x, y)} \in \mathcal{M}'$ such that
\[
(\ell, r) = \Pi_{C}^{*}(x, y, \tau).
\]

By applying Lemma~\ref{bipartite} with $Q' := \{i : M_i \in \mathcal{M}'\}$, we conclude that either at least $\Omega(m^{5/16})$ vertices on the left side of $G_\tau$ are incident to at least one edge, or at least $\Omega(m^{5/16})$ vertices on the right side are incident to at least one edge.

In the first case, let $A \subseteq E$ be a set of edges such that each vertex on the left side is incident to exactly one edge in $A$. In the second case, choose $A$ such that each vertex on the right side is incident to at most one edge in $A$. In both cases, we have $|A| = \Omega(m^{5/16})$.

Recall that the set of edges $E$ is identical to $\mathcal{M}'$, so $A$ is indeed a subset of $\mathcal{M}'$. Let $v \in \{0,1\}^{A}$ be the vector defined by
\[
v_i = b_i,
\]
where $(\ell_i, r_i, b_i)$ is the output of $\Pi_{C}^{*}(g^{-1}(i), \tau)$ for each $i \in A$.

On the other hand, for each $z \in Z_{\tau}$, define a vector $u_z \in \{0,1\}^{A}$ by
\[
(u_z)_i = z_{\ell_i} \oplus z_{r_i}.
\]
Recall that $(u_z)_i$ corresponds to the correct answer of $\GHM \circ g$ on input $(\ell_i, r_i, z)$. Hence,
\[
\mathcal{E}_{\Pi^{*}}(\rv{Z}_\tau, \mathbf{A}) =
\Pr_{(x,y,z)\sim\mu(S,T)}\left[(u_z)_i \neq v_i ~|~ M_i \in A, z \in Z_\tau\right]
= \frac{\mathbb{E}[d_H(\rv{u}_z, v)]}{|A|} \leq 2 e_{\tau}\leq 1/4,
\]
where $\rv{u}_z$ denotes the distribution induced by sampling $z \sim \rv{Z}_\tau$ and computing $u_z$.

Next, we apply the following lemma, a generalization of Fano’s inequality due to Bar-Yossef, Jayram, and Kerenidis~\cite{bar2004exponential}, to upper bound the entropy of $\rv{u}_z$.

\begin{lemma}[\cite{bar2004exponential}]\label{fano}
Let $\rv{W}$ be a random variable over $\{0,1\}^k$, and suppose there exists a fixed vector $v \in \{0,1\}^k$ such that $\mathbb{E}[d_H(\rv{W}, v)] \leq \eps \cdot k$ for some $0 \leq \eps \leq 1/2$. Then the entropy of $\rv{W}$ is bounded by,
\[
\ent(\rv{W}) \leq k \cdot H_2(\eps),
\]
where $H_2(\cdot)$ is the binary entropy function.
\end{lemma}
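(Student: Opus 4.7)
The plan is to reduce to the case $v = 0^k$ and then control $\ent(\rv{W})$ coordinate-wise via subadditivity, concavity of $H_2$, and monotonicity of $H_2$ on $[0,1/2]$. This is a textbook-style generalization of Fano's inequality, so I expect the main content to be assembling a short chain of standard entropy facts rather than developing new machinery.

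First I would set $\rv{Y} := \rv{W} \oplus v$. Since XOR with a fixed vector is a bijection on $\{0,1\}^k$, the map preserves Shannon entropy, giving $\ent(\rv{Y}) = \ent(\rv{W})$. Moreover $d_H(\rv{W}, v)$ equals the Hamming weight $|\rv{Y}| = \sum_i \rv{Y}_i$, so the hypothesis rewrites as
\[
\E[|\rv{Y}|] \;=\; \sum_{i=1}^k p_i \;\leq\; \eps\, k,
\qquad\text{where } p_i := \Pr[\rv{Y}_i = 1].
\]
This eliminates $v$ from the picture and leaves a single linear constraint on the marginals of $\rv{Y}$.

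Next I would apply subadditivity of Shannon entropy (Lemma~\ref{subadditivity}) to obtain
\[
\ent(\rv{Y}) \;\leq\; \sum_{i=1}^k \ent(\rv{Y}_i) \;=\; \sum_{i=1}^k H_2(p_i).
\]
Setting $\bar p := \tfrac{1}{k}\sum_i p_i$, concavity of $H_2$ and Jensen's inequality give $\tfrac{1}{k}\sum_i H_2(p_i) \leq H_2(\bar p)$. Since $\bar p \leq \eps \leq 1/2$ by hypothesis, and $H_2$ is monotonically increasing on $[0,1/2]$, we get $H_2(\bar p) \leq H_2(\eps)$. Chaining the three inequalities yields $\ent(\rv{W}) = \ent(\rv{Y}) \leq k \cdot H_2(\eps)$, as desired.

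I do not anticipate a genuine obstacle: every step is a one-line invocation of a standard fact. The only point warranting care is that the monotonicity step uses the hypothesis $\eps \leq 1/2$ in an essential way — without it the conclusion can fail — so the write-up should explicitly flag where that assumption is consumed. Noting also that the bound is tight when each $\rv{Y}_i$ is an independent $\mathrm{Bernoulli}(\eps)$ helps confirm we have not lost anything in the subadditivity/Jensen relaxations.
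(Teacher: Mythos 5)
Your proof is correct and follows essentially the same route as the paper: reduce to $v=0^k$, apply subadditivity of entropy, then concavity of $H_2$ via Jensen, and finally monotonicity of $H_2$ on $[0,1/2]$. The only differences are expository — you make the ``WLOG $v=0^k$'' reduction explicit via the XOR bijection and you explicitly flag where the hypothesis $\eps \leq 1/2$ is consumed (the monotonicity step), both of which the paper leaves implicit.
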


\begin{proof}
Without loss of generality, we assume that $v = 0^k$. Let $\rv{W} = (\rv{W}_1, \dots, \rv{W}_k)$, where each $\rv{W}_i$ is a Bernoulli random variable with $p_i := \Pr[\rv{W}_i = 1]$. Then
\[
\mathbb{E}[d_H(\rv{W}, 0^k)] = \sum_{i=1}^k p_i \leq \eps \cdot k.
\]
By subadditivity of entropy and the concavity of $H_2(\cdot)$, we have
\[
\ent(\rv{W}) \leq \sum_{i=1}^k \ent(\rv{W}_i) = \sum_{i=1}^k H_2(p_i)
\leq k \cdot H_2\left( \frac{1}{k} \sum_{i=1}^k p_i \right)
\leq k \cdot H_2(\eps).
\]
\end{proof}

\noindent By applying this lemma to $\rv{u}_z$ with $\eps = 1/4$ and $k = |A|$, we obtain
$
\ent(\rv{u}_z) \leq |A| \cdot H_2(1/4).
$ Therefore,
\begin{align*}
\ent(\rv{Z} ~|~ (\Pi_A^*(z), \Pi_B^*(z)) = \tau)
&= \ent(\rv{u}_z ~|~ (\Pi_A^*(z), \Pi_B^*(z)) = \tau) + \ent(\rv{Z} ~|~ \rv{u}_z, (\Pi_A^*(z), \Pi_B^*(z)) = \tau) \\
&\leq |A| \cdot H_2(1/4) + \ent(\rv{Z} ~|~ \rv{u}_z, (\Pi_A^*(z), \Pi_B^*(z)) = \tau).
\end{align*}

Finally, since each vertex (on either the left or right side) in the matching is incident to at most one edge in $A$, we conclude that
\[
\ent(\rv{Z} ~|~ \rv{u}_z, (\Pi_A^*(z), \Pi_B^*(z)) = \tau) \leq \ent(\rv{Z}) - |A|.
\]
Putting everything together,
\[
\ent(\rv{Z} ~|~ (\Pi_A^*(z), \Pi_B^*(z)) = \tau) \leq \ent(\rv{Z}) - |A| (1 - H_2(1/4)).
\]
Since $|A| = \Omega(m^{5/16})$ and $1 - H_2(1/4) > 0$, this yields the desired entropy loss.

\end{proof}

\subsubsection{Proof of Lemma \ref{bipartite}}\label{proof_structure}
We prove Lemma~\ref{bipartite} in this section. Recall that $m$ is a prime number and $\alpha =\sqrt{m}$. This is a purely combinatorial problem, and we first recall the statement.

\begin{lemma}[Restate of Lemma~\ref{bipartite}]
There exists a sequence \( Q \in [m]^{\alpha} \) consisting of distinct elements such that for any subsequence \( Q' = (q_1, \dots, q_r) \subseteq Q \) with \( r \geq \alpha/2 \), the following holds: for any sequence \( L = (\ell_1, \dots, \ell_r) \), either
\begin{itemize}
    \item \( L \) contains at least \( \Omega(m^{5/16}) \) distinct elements, or
    \item the set \( R := \left\{ m + ((\ell_i + q_i) \bmod m) ~|~ 1 \leq i \leq r \right\} \) satisfies \( |R| \geq \Omega(m^{5/16}) \).
\end{itemize}
Here, $\alpha = \lfloor\sqrt{m}\rfloor$ is the parameter we chose previously.
\end{lemma}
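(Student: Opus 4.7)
My plan is to apply the probabilistic method. Set $M := c \, m^{5/16}$ for a sufficiently small absolute constant $c > 0$, and sample $Q$ uniformly at random from the set of length-$\alpha$ sequences of distinct elements of $[m]$ (equivalently, a uniform $\alpha$-element subset of $[m]$ together with a uniform ordering). I will show that with positive probability $Q$ witnesses the conclusion of the lemma, by enumerating over possible ``support sets'' for the adversarial sequences $L$ and $R$ and taking a union bound.

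First I would reduce the statement to a claim about random subsets versus a fixed target set. Suppose, toward a contradiction, that for some subsequence $Q' = (q_1,\dots,q_r) \subseteq Q$ with $r \geq \alpha/2$ and some sequence $L = (\ell_1,\dots,\ell_r)$, the number of distinct values in $L$ is less than $M$ and $|R| < M$, where $R := \{(\ell_i + q_i) \bmod m : 1 \leq i \leq r\}$. Let $L^\star \subseteq [m]$ be the set of distinct values of $L$ and let $R^\star := R$. Since each $q_i \equiv r_i - \ell_i \pmod{m}$ for some $\ell_i \in L^\star$ and $r_i \in R^\star$, the subsequence $Q'$ is contained in the difference set
\[
B(L^\star, R^\star) := \{(r - \ell) \bmod m : \ell \in L^\star,\ r \in R^\star\},
\]
which has cardinality at most $M^2 = c^2 m^{5/8}$. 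Because the $q_i$ are distinct elements of $Q$, this forces $|Q \cap B(L^\star, R^\star)| \geq r \geq \alpha/2$.

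Next, for any fixed set $B \subseteq [m]$ with $|B| \leq c^2 m^{5/8}$, the intersection $|Q \cap B|$ follows a hypergeometric distribution with expectation $\alpha |B|/m \leq c^2 m^{1/8}$, which is much smaller than $\alpha/4 = \Theta(\sqrt{m})$ for small $c$. Applying Hoeffding's inequality for sampling without replacement yields
\[
\Pr\bigl[\,|Q \cap B| \geq \alpha/2\,\bigr] \leq \exp\bigl(-\Omega(\alpha)\bigr) = \exp\bigl(-\Omega(\sqrt{m})\bigr).
\]
I then union-bound over the at most $\binom{m}{M}^2 \leq \exp(O(M \log m)) = \exp(O(m^{5/16} \log m))$ pairs $(L^\star, R^\star)$ with $|L^\star|, |R^\star| \leq M$. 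Since $5/16 < 1/2$, the total failure probability is $\exp(O(m^{5/16}\log m) - \Omega(m^{1/2})) = o(1)$, so a valid $Q$ exists.

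I expect no real obstacle here: the argument is a clean probabilistic method combining hypergeometric concentration with a union bound, and the exponent $5/16$ sits comfortably in the regime where both (i) $M^2 \ll m$ so the expected intersection is tiny and (ii) $M\log m \ll \alpha$ so the union bound closes. In fact the same scheme works for any $M = o(\sqrt{m}/\log m)$, so the bound $\Omega(m^{5/16})$ is set by the downstream use in Lemma~\ref{entropy_loss} (which balances $\sqrt{\alpha}\cdot\mathrm{CC}(\Pi)$ against this combinatorial bound) rather than by any tightness in the random construction.
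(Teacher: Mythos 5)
Your proof is correct, and it takes a genuinely different route from the paper's. The paper proves Lemma~\ref{bipartite} in two stages: it first shows (probabilistically, via a Chernoff bound with negative association) that a uniformly random $\alpha$-subset $Q \subseteq [m]$ is ``$\beta$-periodic-free'' for $\beta = 8\log m$, meaning no size-$\beta$ subset $A \subseteq Q$ has a nonzero shift $b$ with $A+b \subseteq Q$; it then feeds this structural property into a deterministic inclusion--exclusion argument on the partition $R = R_1 \cup \cdots \cup R_t$ (one block per distinct value of $L$), using $|R_i \cap R_j| \leq \beta$ to lower-bound $|R|$. Your approach skips the structural intermediary entirely: you observe that a bad pair $(Q', L)$ with both $|L^\star| < M$ and $|R^\star| < M$ forces $Q' \subseteq B(L^\star,R^\star)$ with $|B| \leq M^2$, so $|Q \cap B| \geq \alpha/2$, which is a huge deviation for a random $\alpha$-subset when $M^2 \ll m$; a union bound over the $\exp(O(M\log m))$ choices of $(L^\star, R^\star)$ then closes since $M\log m \ll \alpha = \Theta(\sqrt{m})$. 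Your argument is shorter, avoids the periodic-free machinery entirely, and is quantitatively stronger: it gives $M = \Omega(m^{1/2-\varepsilon})$ for any $\varepsilon > 0$, whereas the paper's inclusion--exclusion optimization tops out around $m^{1/3-o(1)}$. One correction to your closing remark, though: the exponent $5/16$ in the paper is \emph{not} set by the downstream use in Lemma~\ref{entropy_loss}---it is exactly what the paper's combinatorial argument yields, and the downstream chain $\mathrm{CC}(\Pi) \geq \mathrm{I}(\rv{Z} : \rv{\Pi^*})/\sqrt{\alpha}$ passes any improvement straight through. Plugging your $M = m^{1/2-\varepsilon}$ into the rest of the paper would improve the final bound from $\Omega(m^{1/16})$ to $\Omega(m^{1/4-\varepsilon})$, which is essentially the $\sqrt{n}$ barrier the authors identify as the natural target in their first conjecture (up to the $\sqrt{\alpha}$ loss from local independentization).
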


The following periodic definition is the core concept in our proof of this lemma. 

\begin{definition}\label{dfn: periodic}
We view the sequence \(Q \in [m]^{\alpha}\) as a set of integers modulo \(m\). 
A subset \(A \subseteq Q\) is called \emph{periodic} if there exists a nonzero \(b \in [m] \setminus \{0\}\) such that
\[
A + b := \{ (a+b) \bmod m ~|~ a \in A \} \subseteq Q.
\]
We say that \(Q\) is \(\beta\)-periodic-free if no periodic subset \(A\subseteq Q\) with size \(|A| \geq \beta\).
\end{definition}

\noindent In our proof, we specifically choose \(\beta = 8\log m\).

\begin{lemma}\label{periodic}
Let \(\beta = 8\log m\). Then there exists a subset \(Q \subseteq [m]\) of size \(|Q| = \alpha\) that is \(\beta\)-periodic-free.
\end{lemma}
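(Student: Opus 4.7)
The plan is to use the probabilistic method, exploiting that $m$ is prime. I would sample $Q \subseteq [m]$ by including each element independently with probability $p = 2/\sqrt{m}$, and argue that with positive probability both (i) $|Q| \geq \alpha$ and (ii) for every nonzero $b \in [m]$, the quantity
\[
N_b \;:=\; \bigl|\{i \in [m] : i \in Q \text{ and } i+b \in Q\}\bigr|
\]
is strictly less than $\beta$. The reduction to these two events is immediate, since the largest $A \subseteq Q$ with $A + b \subseteq Q$ is exactly $Q \cap (Q - b)$, which has size $N_b$, so $Q$ is $\beta$-periodic-free iff $\max_{b \neq 0} N_b < \beta$. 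Once I produce such a $Q$, I would restrict to an arbitrary subset of size $\alpha$; this can only decrease each $N_b$ and hence preserves $\beta$-periodic-freeness.

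The size bound (i) is routine: $|Q|$ is a sum of $m$ i.i.d.\ $\mathrm{Bernoulli}(p)$ variables with mean $2\sqrt{m}$, so $\Pr[|Q| < \alpha] \leq e^{-\Omega(\sqrt{m})}$ by a standard Chernoff inequality.

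The main obstacle is (ii), because $N_b = \sum_{i \in [m]} \one[i \in Q]\,\one[i+b \in Q]$ is a sum of strongly dependent indicators --- consecutive summands share a variable --- so Chernoff cannot be applied directly. My key trick is that since $m$ is prime and $b \neq 0$, the shift $i \mapsto i + b$ is a single $m$-cycle on $[m]$; reindexing along this orbit rewrites $N_b = \sum_{j=0}^{m-1} Y'_j Y'_{(j+1) \bmod m}$ with i.i.d.\ $Y'_j \sim \mathrm{Bernoulli}(p)$. Summands whose indices have the same parity use disjoint $Y'$'s and are therefore mutually independent. I would split $N_b$ into its even-indexed and odd-indexed parts, discarding the single wrap-around term $Y'_{m-1} Y'_0$ that bridges the two halves; each part is then a sum of roughly $m/2$ i.i.d.\ $\mathrm{Bernoulli}(p^2)$ variables with mean $\approx 2$. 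A multiplicative Chernoff bound then gives, for $m$ sufficiently large, $\Pr[\text{either part} \geq 4 \log m] \leq m^{-4}$, and hence $\Pr[N_b \geq \beta] \leq 3 m^{-4}$ (after accounting for the at-most-$1$ contribution of the dropped term).

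A union bound over the $m - 1$ nonzero shifts $b$ then yields $\Pr[\max_b N_b \geq \beta] \leq 3 m^{-3}$, which combined with the size bound shows that both events hold simultaneously with positive probability, producing the required $\beta$-periodic-free set of size $\alpha$.
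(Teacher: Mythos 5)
Your proof is correct and follows essentially the same probabilistic-method strategy as the paper: exploit primality to reduce each shift to a single $m$-cycle, decompose $N_b$ into groups of mutually independent summands, apply a Chernoff bound, and union-bound over $b$. The two minor differences are that (a) you sample each element independently with probability $2/\sqrt{m}$ and treat $|Q|\geq\alpha$ as a separate high-probability event before trimming down to size exactly $\alpha$, whereas the paper samples a uniformly random set of size exactly $\alpha$ and passes to the Bernoulli model via negative association; and (b) you partition the cycle positions by parity and explicitly discard the one wrap-around term, whereas the paper partitions positions modulo $3$. Your sample-then-trim variant is arguably a little cleaner, since it sidesteps the negative-association lemma entirely.
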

\begin{proof}
We use a probabilistic argument to prove that, i.e., we choose $Q$ uniformly at random with \(|Q| = \alpha\).
Fix \(b\in [m]\setminus\{0\}\), define
\[
A_b=\bigl\{\,a\in[m]\;\bigl|\;a\in Q\ \text{and}\ (a+b)\bmod m\in Q\bigr\}.
\]
and we aim to bound
\[
\Pr_{Q}\bigl[\exists b\in[m]\setminus\{0\}, |A_b|\geq\beta\,\bigr]
\]

\noindent First, we use the following claim to simplify our proof.
\begin{claim}\label{cl:mul-cycle}
Let \(m\) be a prime, for any  \(b\in[m]\setminus\{0\}\) and $a\in [m]$, 
\[
\bigl\{a+k\,b \bmod m\;\bigm|\;k=0,1,\dots ,m-1\bigr\} \;=\;[m].
\]
\end{claim}
\noindent By Claim~\ref{cl:mul-cycle}, we have that 
$
\Pr_{Q}\left[\,|A_b|\geq\beta\,\right] = \Pr_{Q}\left[\,|A_1|\geq\beta\,\right]
$
for any $b \in [m] \setminus \{0\}$. Hence, it suffices to analyze the case $b = 1$. We partition $[m]$ into three residue classes, defining
\[
S_{\gamma} := \left\{\, x \in [m] ~\middle|~ x \equiv \gamma \pmod{3} \,\right\} \quad \text{for } \gamma \in \{0,1,2\}.
\]
For each $\gamma$ and a set $Q \subseteq [m]$, we define:
\[
P_\gamma=\{(a,a+1)~|~ a\in S_\gamma\},\qquad
X_{a}(Q):=\mathbf 1(a\in Q,\,a+1\in Q),\qquad
Y_\gamma(Q):=\sum_{(a,a+1)\in P_\gamma}X_{a}(Q).
\]

\noindent Clearly, we have $|A_1| = Y_0(Q) + Y_1(Q) + Y_2(Q).$ Therefore, the desired probability becomes
\[
\Pr_{Q}\left[\,|A_1| \geq \beta\,\right] = \Pr_{Q}\left[\, Y_0(Q) + Y_1(Q) + Y_2(Q) \geq \beta\,\right].
\]
In what follows, we prove that for every $\gamma \in \{0,1,2\}$.
\[
\Pr\left[Y_\gamma(Q) \ge \frac{8}{3} \cdot \log m \right] \le \exp\left(-2(\log m)^2\right) \tag{$\ast$}
\]
Instead of sampling exactly $|Q| = \alpha$ elements, we alternatively consider the Bernoulli distribution $Q^*$, where each element $x \in [m]$ is included in $Q^*$ independently with probability $\alpha / m$. Observe that all pairs in $P_\gamma$ are pairwise disjoint. By the negative association of random variables \cite{joag1983negative}, we then have that 
\[
\Pr_{Q}\bigl[Y_\gamma(Q)\ge t\bigr] \le \Pr_{Q^{*}}\bigl[Y_\gamma(Q^*)\ge t\bigr].
\]

Under the Bernoulli distribution, $Y_\gamma(Q^*) = \sum_{(a,a+1)\in P_\gamma} X_a(Q^*)$ is a sum of $|P_\gamma| = \lfloor m/3 \rfloor$ independent indicator variables. For each $a$, we have
\[
\Pr[X_a = 1] = \Pr[a \in Q^*, a+1 \in Q^*] = \left(\frac{\alpha}{m}\right)^2 = \frac{1}{m},
\]
and hence the expected value $\mu := \mathbb{E}[Y_\gamma(Q^*)] \le 1/3$. We apply Chernoff’s inequality:
\[
\Pr\left[Y_\gamma(Q^*) \ge \frac{8\log m}{3} \right] 
= \Pr\left[Y_\gamma(Q^*) \ge (1+\delta)\mu \right]
\le \left(\frac{e^\delta}{(1+\delta)^{1+\delta}}\right)^\mu,
\]
where
\[
\delta := \frac{\frac{8\log m}{3}}{\mu} - 1 \ge 8 \log m - 1.
\]
Since $\mu \le 1/3$, it follows that
\[
\Pr_{Q}[Y_\gamma(Q) \ge \tfrac{8}{3} \log m] \le \Pr[Y_\gamma(Q^*) \ge \tfrac{8}{3} \log m] \le \exp\left(-2(\log m)^2\right). 
\]

\noindent Thus, by $(\ast)$:
\[
\Pr\left[ |A_1| \ge 8 \log m \right] 
= \Pr\left[ \sum_{\gamma = 0}^{2} Y_\gamma(Q) \ge 8 \log m \right]
\le \sum_{\gamma=0}^{2} \Pr\left[ Y_\gamma(Q) \ge \tfrac{8}{3} \log m \right]
\le 3 \exp\left(-2(\log m)^2\right).
\]

\noindent Recall that $\beta = 8 \log m$, applying a union bound over all $b \in [m] \setminus \{0\}$ gives
\[
\Pr_{Q: |Q| = \sqrt{m}} \left[ \exists\, b \in [m] \setminus \{0\} \text{ such that } |A_b| \ge \beta \right]
\le (m - 1) \cdot \frac{1}{2m} < \frac{1}{2}.
\]

\noindent Hence, there exists a subset $Q \subseteq [m]$ of size $|Q| = \alpha$ that is $\beta$-periodic-free.

\end{proof}

\noindent Now we are ready to prove Lemma~\ref{bipartite}.
\begin{proof}[Proof of Lemma \ref{bipartite}]
Fix any \(Q=(q_1,...,q_{\alpha})\subseteq [m]\) with $|Q| = \alpha$ that is \emph{\(\beta\)-periodic-free} by  Lemma \ref{periodic}. Since any subset of a \emph{\(\beta\)-periodic-free} set is also \emph{\(\beta\)-periodic-free}, any subsequence \(Q'\subseteq Q\) of size \(|Q'| = \alpha/2\) remains \emph{\(\beta\)-periodic-free}.

Let $L = (\ell_1, \dots, \ell_r)$ be any sequence.  
If $L$ contains at least $\Omega(m^{5/16})$ distinct elements, then the lemma follows immediately.  
Otherwise, let $t$ denote the number of distinct elements in $L$, and let these be denoted by $c_1, \dots, c_t$.  
We partition the set $R$ as follows:
\[
R = R_1 \cup \dots \cup R_t, \quad \text{where } R_j = \left\{ m + ((\ell_i + q_i) \bmod m) ~\middle|~ \ell_i = c_j \right\}.
\]
Without loss of generality, we assume that $|R_1| \geq |R_2| \geq \dots \geq |R_t|$. Since elements of \(Q'\) are distinct, we have
\[
|R_1|+\cdots+|R_t| = |Q'| \geq \alpha/2.
\]
We claim that for any \(i,j\in [t]\), it holds that
\[
|R_i\cap R_j|\leq \beta.
\]
Otherwise, if \(|R_i\cap R_j|>\beta\), consider \(A=R_i\cap R_j\). Then both \(A-c_i-m\) and \(A-c_j-m\) are subsets of \(Q'\), and since \(b=c_i-c_j\neq 0\), the set \(A-c_i-m\) would be a periodic subset of size larger than \(\beta\), contradicting the fact that $Q'$ is \emph{\(\beta\)-periodic-free}. Thus, by inclusion-exclusion,
\[
|R_1\cup \dots\cup R_j| \geq |R_1| + |R_2| - |R_2\cap R_1| +\cdots + |R_j| - |R_j\cap R_{1}|-\cdots -|R_j\cap R_{j-1}| > |R_1|+\cdots+|R_j| - \beta\cdot j^2.
\]
Since \(|R_1|\geq \cdots\geq |R_t|\), \(t\leq m^{5/16}\) and $\alpha = \sqrt{m}$, we have
\[
|R_1\cup \dots\cup R_j| \geq \frac{j\alpha}{2t} - \beta j^2 \geq j\cdot m^{3/16}/2 - \beta j^2.
\]

\noindent Choosing \(j=m^{2.2/16}\), since $\beta = 8\cdot \log m$ we obtain
\[
|R|\geq m^{5.2/16}/2 - 8\log m \cdot m^{4.1/16} > m^{5/16}
\]
for sufficiently large \(m\).
\end{proof}

\section{Open Problems}

We have shown an $\Omega(n^{1/16})$ randomized lower bound for the simultaneous NOF communication complexity of $\GHM \circ g$. On the other hand, the best known upper bound remains $\sqrt{n}$, where Alice simply sends $\sqrt{n}$ random bits of $z$. By the birthday paradox, with high probability, Charlie can recover the value of at least one matched pair from Alice’s message. Bridging this gap remains a compelling open question.

\begin{conjecture}
There exists a gadget $g$ such that the randomized simultaneous NOF communication complexity of $\GHM \circ g$ is $\Omega(\sqrt{n})$.
\end{conjecture}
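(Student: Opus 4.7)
The conjecture seeks to close the gap between the $\Omega(n^{1/16})$ lower bound of Theorem~\ref{main_theorem} and the $O(\sqrt{n})$ birthday-paradox upper bound. In the current proof, the final $m^{1/16}$ decomposes as $m^{5/16}/\sqrt{\alpha}$: the combinatorial Lemma~\ref{bipartite} yields an information-complexity bound of $\Omega(m^{5/16})$ for the locally independentized protocol, while Claim~\ref{symmetric_property} charges an enumeration overhead of $\sqrt{\alpha} = m^{1/4}$. Any plan for the conjecture must attack both sources of loss.

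First I would revisit Lemma~\ref{bipartite}. The inclusion-exclusion there is not optimized: choosing $j \asymp \alpha/(t\beta)$ rather than the hardwired $j = m^{2.2/16}$ should push the bound to $\wt{\Omega}((\alpha^{2}/\beta)^{1/3}) = \wt{\Omega}(m^{1/3})$, which alone would improve the final answer to $\wt{\Omega}(m^{1/12})$. Pushing further likely requires a genuinely stronger combinatorial structure on $Q$---for instance a Sidon-like set rather than merely a $\beta$-periodic-free one---so that the pairwise intersections between the $R_i$'s drop from $O(\log m)$ to $O(1)$ and the union bound scales toward $\wt{\Omega}(m^{1/2})$.

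The more serious task is to eliminate the local independentization loss. The plan is to analyze the original NOF protocol directly and lower bound $\mathrm{I}(\rv{Z} ; \rv{\Pi_A}, \rv{\Pi_B} \mid \rv{X}, \rv{Y}) = \Omega(\sqrt{n})$ under $\mu(S,T)$ without first enumerating over $S \times T$. The leverage point is that for a sufficiently rich gadget $g$ the set $\{g(x,y) : x \in S\}$ is large for every fixed $y \in T$ (and symmetrically in $x$); hence for any fixed $y$ the encoding $\Pi_A(y, \cdot)$ is useful only against the family $\{M_{g(x,y)} : x \in S\}$, and Bob is forced to cover the remaining matchings through $\Pi_B(x, \cdot)$. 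Applying Lemma~\ref{fano} matching-by-matching and then averaging over $(x, y) \sim \mu(S,T)$, instead of aggregating first and paying for enumeration, would aim to extract the full $\Omega(\sqrt{n})$ bits with no multiplicative overhead.

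The hard part will be precisely this last step. Local independentization was introduced in the current proof \emph{because} $\Pi_A(y, z)$ is not a function of $z$ alone, and the classical Hidden-Matching argument of~\cite{bar2004exponential} relies on that clean encoding structure; conditioning on $y$ and $x$ after the fact breaks the entropy-drop estimate $\ent(\rv{Z}) - \ent(\rv{Z} \mid \tau) = \Omega(m^{5/16})$ at the core of Lemma~\ref{entropy_loss}, since the partition of $Z$ is no longer induced by a single transcript. A direct conditional analysis therefore demands a new NOF-specific information inequality that captures the interplay between Alice's $y$-dependent and Bob's $x$-dependent encodings and rules out short input-tailored encodings of $\rv{Z}$. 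Absent such a tool, even an optimal combinatorial improvement to Lemma~\ref{bipartite} caps the final bound at $\wt{\Omega}(m^{1/12})$, so I expect the enumeration overhead---not the combinatorics---to be the decisive barrier.
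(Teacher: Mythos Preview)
The statement is presented in the paper as an open \emph{conjecture}; the paper gives no proof and explicitly lists it as a ``compelling open question.'' There is therefore no paper proof to compare against, and your write-up is not a proof either---it is a research outline, as you yourself make clear.

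As a diagnosis of where the losses in Theorem~\ref{main_theorem} come from, your analysis is accurate. The observation that the choice $j = m^{2.2/16}$ in the proof of Lemma~\ref{bipartite} is suboptimal is correct: optimizing $j \asymp \alpha/(t\beta)$ (and handling the boundary case $j=t$) does push the either/or bound there to $\wt\Omega(m^{1/3})$, and hence the final lower bound to $\wt\Omega(m^{1/12})$, exactly as you state. That is a genuine, if modest, improvement one could write down.

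But the heart of the conjecture---getting all the way to $\Omega(\sqrt{n})$---is not addressed by anything you actually carry out. Your plan to bound $\mathrm{I}(\rv{Z};\rv{\Pi_A},\rv{\Pi_B}\mid \rv{X},\rv{Y})$ directly, without the $\sqrt{\alpha}$ enumeration blowup of Claim~\ref{symmetric_property}, is precisely the step you flag in your final paragraph as requiring ``a new NOF-specific information inequality'' that does not yet exist. You correctly identify why the Bar-Yossef--Jayram--Kerenidis entropy-drop argument breaks once $\Pi_A$ depends on $y$ and $\Pi_B$ depends on $x$, but you do not supply a replacement. So after your proposal the conjecture is exactly as open as it is in the paper: the combinatorial slack is real but secondary, and the decisive obstacle---removing the local-independentization overhead---remains untouched.
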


Our techniques are currently tailored to the simultaneous NOF model. It remains open whether similar quantum-classical separations can be established in more general communication models, such as the one-way NOF model.

\begin{conjecture}
There exists a gadget $g$ such that the randomized one-way NOF communication complexity of $\GHM \circ g$ is $\Omega(n^{\Omega(1)})$.
\end{conjecture}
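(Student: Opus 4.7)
This conjecture asks for the same exponential separation in the one-way NOF model, where Bob's message is allowed to depend on Alice's transcript (fix the order Alice $\to$ Bob $\to$ Charlie). The approach that works in the simultaneous setting breaks down at exactly one place: the local-independentization of Section~\ref{symmetric} crucially uses that Bob's message is a function of $(x,z)$ alone, but once $\Pi_B$ depends on $\pi_A$ and $\pi_A$ encodes information about $y$, Bob's message carries latent dependence on $y$ and one can no longer absorb the cost by enumerating over $S$. My plan is to develop a \emph{conditional} local-independentization step: retain the same hard distribution $\mu(S,T)$ from Section~\ref{gadget}, treat $\pi_A$ as an extra round of public randomness, and for each fixing of $\pi_A$ run the existing argument on the posterior distribution.

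\noindent The heart of the reduction is then a smoothing lemma for the posterior. Because $\pi_A$ carries only $|\pi_A|$ bits, a standard averaging argument shows that for a set of ``good'' transcripts of total mass close to one, the conditional distribution of $z$ given $\pi_A$ has min-entropy close to $n$, and the conditional distribution of $(x,y)$ given $(\pi_A, z)$ remains nearly product on $S \times T$. On each such good $\pi_A$ I would apply local-independentization to $\Pi_B$ alone (enumerating over $x \in S$, which costs a $\sqrt{\alpha}$ factor as before), reducing the residual problem to a Bob-to-Charlie one-way problem whose input $z$ is drawn from a nearly-uniform distribution. A strengthened version of Lemma~\ref{entropy_loss} and Lemma~\ref{bipartite}---tolerant of a high-min-entropy but not-quite-uniform $z$---would then yield $\Omega(m^{5/16})$ information per good $\pi_A$, and summing with $|\pi_A|$ delivers the desired $n^{\Omega(1)}$ lower bound.

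\noindent The main obstacle is controlling the correlation between $y$ and $z$ that Alice's message can induce. Lemma~\ref{bipartite} is a purely combinatorial statement about subsets $Q$ of $[m]$ and is blind to how $z$ is distributed, but Lemma~\ref{entropy_loss} uses uniformity of $z$ in two places: the Fano-type bound on the entropy of $\rv{u}_z$, and the final inequality $\ent(\rv{Z} \,|\, \rv{u}_z, \tau) \le \ent(\rv{Z}) - |A|$. Both need to be upgraded to work under a distribution on $z$ that is only close to uniform in KL or min-entropy, and the $5/16$ exponent will have to be re-tuned to absorb the smoothing loss and an extra union bound across Alice's transcripts. I expect this to be the technically hardest step, and it may force a strengthening of Lemma~\ref{bipartite} itself---for instance, extracting a more robust combinatorial guarantee from the periodic-free set $Q$ so that the bipartite-graph argument continues to succeed when $z$ is restricted to a large high-entropy slice.
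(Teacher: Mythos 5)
This statement is a \emph{conjecture} in the paper; the authors explicitly leave it open, so there is no ``paper proof'' to compare against. Evaluating your proposal on its own terms, it has a fundamental gap that kills the entire plan at step one.

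\textbf{The gap: $\mu(S,T)$ is trivially easy for one-way NOF.} Your plan reuses the hard distribution $\mu(S,T)$ from Section~\ref{gadget}, under which $y$ is uniform on a set $T$ with $|T|=\sqrt{\alpha}=\Theta(m^{1/4})$, hence $\ent(\rv{Y})=\log|T|=O(\log n)$. In the one-way order Alice $\to$ Bob $\to$ Charlie, Alice knows $y$ and can send its index inside $T$ to Bob using $O(\log n)$ bits. Bob already knows $(x,z)$, so after receiving $y$ he knows the entire input, computes $i=g(x,y)$, picks any edge $(\ell,r)\in M_i$, and forwards $(\ell,r,z_\ell\oplus z_r)$ to Charlie. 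This is a zero-error, $O(\log n)$-bit one-way NOF protocol under $\mu(S,T)$. Consequently no lower bound beyond $O(\log n)$ can be extracted from this distribution, no matter how the downstream lemmas are strengthened. The reason the simultaneous lower bound survives is that Alice's message goes to Charlie, who \emph{already knows} $y$; that asymmetry is precisely what your order Alice $\to$ Bob $\to$ Charlie destroys.

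\textbf{Why the smoothing step cannot save it.} Your averaging argument would show that $z$ retains high min-entropy given $\pi_A$ when $|\pi_A|\ll n$ --- that part is fine. But the claim that ``the conditional distribution of $(x,y)$ given $(\pi_A,z)$ remains nearly product on $S\times T$'' (read: close to uniform on $S\times T$) is false as soon as $|\pi_A|\gtrsim\log|T|=O(\log n)$: Alice can make $\pi_A$ determine $y$ exactly. Once $y$ is fixed by the transcript, the set of matchings $\{M_{g(x,y)}: x\in S\}$ has size only $\sqrt{\alpha}$, not $\alpha$, so the hypothesis $r\geq\alpha/2$ of Lemma~\ref{bipartite} is unreachable and the rest of the argument never gets off the ground. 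This is not a quantitative issue that re-tuning the $5/16$ exponent or robustifying Lemma~\ref{bipartite} can absorb; it is a structural collapse of the hard distribution. Any route to the conjecture must either use a genuinely different hard distribution in which the forehead-inputs $x,y$ are not confined to $\poly(n)$-size sets (but then local-independentization blows up communication by a super-polynomial factor and the whole accounting breaks), or abandon local-independentization for a technique that tolerates Bob's message depending on $y$ through $\pi_A$. Your proposal does neither, so as written it does not constitute progress on the conjecture.
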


\bibliographystyle{alpha}
\bibliography{reference.bib}
\newpage
\section*{Appendix}

\paragraph{Quantum protocols for $\textbf{GHM}_n$ \cite{bar2004exponential}:}
We present a quantum protocol for the gadgeted hidden matching problem with communication complexity of $O(\log n)$ qubits. Let $z=(z_1,\cdots,z_n) \in \{0,1\}^n$ and $y \in [n]$ be Alice’s input and $x,y \in [n]$ be Charlie’s input.

\begin{enumerate}
    \item Alice sends the state \( |\psi\rangle = \frac{1}{\sqrt{n}} \sum_{i=1}^{n} (-1)^{z_i} |i\rangle. \)
    
    \item Charlie performs a measurement on the state \( |\psi\rangle \) in the orthonormal basis
    \[
    B = \left\{ \frac{1}{\sqrt{2}} (|k\rangle \pm |\ell\rangle) \mid (k, \ell) \in M_{g(x,y)} \right\}.
    \]
\end{enumerate}

The probability that the outcome of the measurement is a basis state \( \frac{1}{\sqrt{2}} (|k\rangle + |\ell\rangle) \) is
\[
|\langle \psi | \frac{1}{\sqrt{2}} (|k\rangle + |\ell\rangle) \rangle|^2 = \frac{1}{2n} \left( (-1)^{z_k} + (-1)^{z_\ell} \right)^2.
\]

This equals $2/n$ if $z_k \oplus z_\ell = 0$ and $0$ otherwise. Similarly, for the states \( \frac{1}{\sqrt{2}} (|k\rangle - |\ell\rangle) \), we have that
\[
|\langle \psi | \frac{1}{\sqrt{2}} (|k\rangle - |\ell\rangle) \rangle|^2 = 0 \quad \text{if } z_k \oplus z_\ell = 0, \text{ and } \frac{2}{n} \text{ if } z_k \oplus z_\ell = 1.
\]

Hence, if the outcome of the measurement is a state \( \frac{1}{\sqrt{2}} (|k\rangle + |\ell\rangle) \), then Charlie knows with certainty that $z_k \oplus z_\ell = 0$ and outputs \( \langle k, \ell, 0 \rangle \). If the outcome is a state \( \frac{1}{\sqrt{2}} (|k\rangle - |\ell\rangle) \), then Charlie knows with certainty that $z_k \oplus z_\ell = 1$ and hence outputs \( \langle k, \ell, 1 \rangle \). Note that the measurement depends only on Charlie’s input and that the algorithm is $0$-error.

\end{document}